\definecolor{DarkGreen}{rgb}{0.1,0.5,0.1}
\renewcommand*{\backref}[1]{}
\renewcommand*{\backrefalt}[4]{%
    \ifcase #1 (Not cited.)%
    \or        (Cited on page~#2)%
    \else      (Cited on pages~#2)%
    \fi}
\Crefname{property}{Property}{Properties}
\Crefname{theorem}{Theorem}{Theorems}
\Crefname{example}{Example}{Examples}
\Crefname{table}{Table}{Tables}
\newcommand*{\tikzmk}[1]{\tikz[remember picture,overlay,] \node (#1) {};\ignorespaces}
\newcommand{\boxit}[1]{\tikz[remember picture,overlay]{\node[yshift=3pt,xshift=4pt,fill=#1,opacity=.25,fit={(A)($(B)+(1.0\linewidth,.8\baselineskip)$)}] {};}\ignorespaces}
\colorlet{mygray}{gray!40}
\newcommand\qedblob{\ding{113}} 
\def\literalqed{{\ \nolinebreak\hfill\mbox{\qedblob\quad}}} 
\def\qed{\literalqed} 
\newenvironment{proofsketch}{\par\addvspace{\baselineskip}\noindent{\emph{Proof Sketch.}}}{\literalqed\bigskip}
\renewcommand{\paragraph}{%
 \@startsection{paragraph}{4}%
 {\z@}{1.1ex \@plus 1ex \@minus .2ex}{-1em}%
 {\normalfont\normalsize\bfseries}%
}
\def\thickhline{%
  \noalign{\ifnum0=`}\fi\hrule \@height \thickarrayrulewidth \futurelet
   \reserved@a\@xthickhline}
\def\@xthickhline{\ifx\reserved@a\thickhline
               \vskip\doublerulesep
               \vskip-\thickarrayrulewidth
             \fi
      \ifnum0=`{\fi}}
\newlength{\thickarrayrulewidth}
\let\oldnl\nl
\newcommand{\nonl}{\renewcommand{\nl}{\let\nl\oldnl}}
\theoremstyle{definition}
\newtheorem{definition}{Definition}
\newenvironment{example}{\pushQED{\qed}\examplex}{\popQED\endexamplex}
\theoremstyle{remark}
\Crefname{claim}{Claim}{Claims}
\newcommand{\C}{\mathcal{C}}
\newcommand{\eps}{\varepsilon}
\newcommand{\I}{\mathcal{I}}
\newcommand{\myoverbar}[1]{\makebox[0pt]{$\phantom{#1}\overline{\phantom{#1}}$}#1}
\newcommand{\myunderbar}[1]{\makebox[0pt]{$\phantom{#1}\underline{\phantom{#1}}$}#1}
\renewcommand{\O}{\mathcal{O}}
\newcommand{\opt}{\texttt{opt}}
\renewcommand{\P}{\mathcal{P}}
\newcommand{\ProxyVoting}{\textup{\textsc{Proxy Voting}}}
\newcommand{\R}{\mathbb{R}}
\renewcommand{\top}{\texttt{top}}
\begin{document}

\title{Representative Proxy Voting}
\date{}

\author[1]{Elliot Anshelevich}
\author[2]{Zack Fitzsimmons}
\author[3]{Rohit Vaish}
\author[4]{Lirong Xia}
\affil[1]{Rensselaer Polytechnic Institute\\
	{\small\texttt{eanshel@rpi.edu}}}
\affil[2]{College of the Holy Cross\\
	{\small\texttt{zfitzsim@holycross.edu}}}
\affil[3]{Tata Institute of Fundamental Research\\
	{\small\texttt{rohit.vaish@tifr.res.in}}}
\affil[4]{Rensselaer Polytechnic Institute\\
	{\small\texttt{xial@cs.rpi.edu}}}

\maketitle

\begin{abstract}
We study a model of proxy voting where the candidates, voters, and proxies are all located on the real line, and instead of voting directly, each voter delegates its vote to the closest proxy. The goal is to find a set of proxies that is \emph{$\theta$-representative}, which entails that for any voter located anywhere on the line, its favorite candidate is within a distance $\theta$ of the favorite candidate of its closest proxy. This property guarantees a strong form of representation as the set of voters is not required to be fixed in advance, or even be finite. We show that for candidates located on a line, an optimal proxy arrangement can be computed in polynomial time. Moreover, we provide upper and lower bounds on the number of proxies required to form a $\theta$-representative set, thus showing that a relatively small number of proxies is enough to capture the preferences of any set of voters. An additional beneficial property of a $\theta$-representative proxy arrangement is that for strict-Condorcet voting rules, the outcome of proxy voting is similarly close to the outcome of direct voting.
\end{abstract}

\section{Introduction}

It is natural to consider settings where voters either may not be able to or may not be willing to directly  cast their vote, but instead decide to delegate their votes to a proxy. In much of the related work on proxy voting, the proxies are chosen from the set of eligible voters to then represent the electorate (see, e.g.,~\citealp{coh-man-mei-mei-ord:c:proxy-better-outcomes}). In this paper, we consider a model for proxy voting that \emph{introduces} proxies to the election using only the arrangement of the candidates and a given distance $\theta$ for {\em any} collection of voters.

On first glance, the idea of ``creating'' proxies from scratch might seem unnatural. However, note that in real-world applications where the `candidates' correspond to different resource allocations or long-term policy decisions (not actual political candidates), and the `proxies' correspond to human representatives/experts or policy positions, it is reasonable to assume that human voters would find a proxy more relatable than the actual candidate.

We call our arrangement \emph{$\theta$-representative} since each voter's closest proxy is guaranteed to have a top preference that is within $\theta$ of the voter's. This can be interpreted as providing a set of \emph{allowed votes} identified by a set of proxies placed in the metric space, and naturally the voters cast the vote of the proxy closest to them with the guarantee that this preference is {close to} the voter's. Another way to think about our model is that we form a set of \emph{representatives} (proxies) whose choice does not depend on the locations of the voters (since these locations are difficult to determine exactly, or perhaps they are not static and change over time). The goal for this set of representatives is that it captures well the opinions of the voters, even as the voters change over time, if the voters are able to express their opinions by delegating their vote to their closest representative (proxy).

We consider elections where the voters, the proxies, and the candidates are all located in a metric space, and each voter and each proxy has spatial preferences determined by its distance to each candidate (see, e.g.,~\citealp{Sch08} for a survey on spatial voting). We focus our results on the one-dimensional case where all candidates and voters lie in the interval $[0,1]$ (as in \citealp{coh-man-mei-mei-ord:c:proxy-better-outcomes}). While not as general as an arbitrary metric space, it is an important step in understanding the behavior of the problem in more general domains. The one-dimensional assumption encompasses the well-studied domains of single-peaked~\citep{bla:j:rationale-of-group-decision-making} and single-crossing preferences~\citep{mir:j:single-crossing}.

What sets our contributions apart from the related work on proxy voting is that we determine a proxy arrangement \emph{without} knowing the locations of the voters; that is, a $\theta$-representative proxy
arrangement is representative of {\em all} possible sets of voters simultaneously. In contrast, related work on proxy voting generally selects proxies or representatives from among a given set of voters~(see, e.g.,~\citealp{coh-man-mei-mei-ord:c:proxy-better-outcomes, mei-san-ten:t:representative-committees} and the references therein).

\renewcommand{\arraystretch}{1.5}
 \begin{table*}[t]
 \centering
 \footnotesize
 \begin{tabular}{
 |>{\centering}m{0.27\textwidth}|
 >{\centering}m{0.22\textwidth}|
 >{\centering}m{0.22\textwidth}|
 >{\centering\arraybackslash}m{0.14\textwidth}|}
     %
     \hline
     \multirow{2}{*}{\textbf{Restrictions on proxies} $\downarrow$} & \multicolumn{2}{c|}{\textbf{Bounds on the number of proxies}} & \textbf{Computational}\\
     \cline{2-3}
     & Upper bound & Lower bound & \textbf{Results}\\
     \hline
     \multirow{2}{*}{On top of candidates (Restricted)} & \normalsize{$2\lfloor\frac{1}{\theta}\rfloor$}  & \normalsize{$2\lfloor\frac{1}{\theta}\rfloor$}  & $\opt$ in poly time\\
     & (\Cref{thm:UpperBound_Restricted}) & (\Cref{prop:LowerBound_Restricted}) & (\Cref{thm:Opt_Restricted})\\
     \cline{1-4}
     \multirow{2}{*}{Anywhere within $\mathbb{R}$ (Unrestricted)} & \normalsize{$\frac{3}{2}\lceil\frac{1}{\theta}\rceil$}  & \normalsize{$\lceil\frac{1}{\theta}\rceil$} & $\opt$ in poly time\\
     & (\Cref{thm:UpperBound_Unrestricted}) & (\Cref{prop:LowerBound_Unrestricted}) & (\Cref{thm:Unrestricted_Proxies_Optimal})\\
     \hline
 \end{tabular}
 \caption{Summary of results. For each assumption on the positioning of the proxies (left column), the second and the third columns provide upper and lower bounds, respectively, on the number of proxies as a function of $\theta$ for a $\theta$-representative proxy arrangement. The rightmost column contains the computational results for optimal proxy arrangements.}
 \label{tab:Results}
 \end{table*}

\paragraph{Our Contributions}
Our main contributions are as follows (see also 
Table~\ref{tab:Results}):

\begin{itemize} 
    \item We introduce a new model of proxy voting and measure for
    the representation of voter preferences. We consider both the \emph{unrestricted} case where the proxies can be placed anywhere, and the \emph{restricted} case where the proxies can only be placed at the candidate locations.\footnote{Our results also hold if the possible proxy locations are more general, as long as they include all candidate locations as a subset.}
    
    \item We first provide algorithms for computing an optimal $\theta$-representative proxy arrangement (i.e., one that uses a \emph{minimum} number of proxies) in polynomial time~(\Cref{thm:Opt_Restricted,thm:Unrestricted_Proxies_Optimal}). However, these algorithms do not provide much insight into \emph{how many} proxies are actually necessary in order to be $\theta$-representative, for any possible set of candidates and voters. Because of this, we also prove upper and lower bounds on the number of proxies needed to have a $\theta$-representative proxy arrangement (see Table~\ref{tab:Results}). These bounds show that a relatively small number of proxies is enough to capture the preferences of any set of voters, even in the worst case.
    
    \item Our results also address the \emph{dual} problem of minimizing the distance threshold $\theta$ for a given number of proxies. For example, we prove that for the unrestricted case, one can use a given budget of $k$ proxies to compute a $\theta$-representative arrangement with $\theta\le \frac{3}{2}\lceil\frac{1}{k}\rceil$ (\Cref{cor:UpperBound_Unrestricted_Theta}).
    
    \item We observe that the proxy arrangements determined by our algorithms are not only $\theta$-representative with respect to the voters, but for elections using strict-Condorcet voting rules, the direct election outcome (i.e., without proxies) is within $\theta$ of the proxy-voting outcome (\Cref{prop:rep-voting}).
\end{itemize}

\section{Related Work}

\citet{alg:j:proxy-voting} introduce a general proxy voting model where there is a fixed set of proxies and the voters can choose a proxy to represent them. \citet{coh-man-mei-mei-ord:c:proxy-better-outcomes} consider a proxy voting setting closely related to ours where the voters, the proxies, and the candidates are on an interval, and the voters delegate their vote to their nearest proxy. However, in their work, the proxies are selected at random and the focus is on how the outcome is affected.

\citet{gre:j:proxy-direct} consider proxy voting with spatial preferences where proxies are elected from among the voters, but additionally explore settings where proxies can further delegate their vote, which is often referred to as delegative democracy (see, e.g.,~\citealp{kah-mac-pro:c:liquid-democracy-algo,goe-kah-mac-pro:c:fuild-mech-democracy}). More general models of delegative democracy have also been studied (see, e.g.,~\citealp{bri-tal:c:pairwise-democracy,abr-mat:c:flexible-democracy}). Another line of research considers electing a set of representatives from the voters, and generally comparing the outcome from the vote of the committee and the outcome from direct voting~\citep{sko:j:comittees-multiwinner,piv-soh:j:weighted-democracy,mei-san-ten:t:representative-committees,Magdon-Ismail2018:A-Mathematical}.

Besides having a $\theta$-representative proxy arrangement, we observe that for Condorcet-consistent voting rules, our results guarantee that the proxy-voting outcome is within $\theta$ of the direct-voting outcome. This notion of a $\theta$-representative outcome is related to, but quite different than, the notion of
distortion~\citet{pro-ros:c:distortion}. A voting rule's distortion is the worst-case ratio of the distance from the voters to the winner and the candidate that minimizes this distance. The study of distortion is an active area of research in voting (see e.g., \citealp{ans-bha-elk-pos-sko:j:social-choice-metric,Goel17,abr-ans-zhu:c:awareness-metric,kem:c:communication-and-distortion}).

\citet{FMS+20Designing} study a similar model to ours for a problem motivated by fairness in academic hiring. They consider a setup where a set of applicants (analogous to candidates in our model) are chosen by a set of experts (analogous to proxies). Each applicant is associated with a known quality score, and the applicants as well as the experts are arranged on a line. Each expert votes for its closest candidate, where closeness is defined as ``distance minus quality.'' As with the related work on distortion, the aim of~\citet{FMS+20Designing} is related to the outcome, while our paper is focused on finding a representative set of proxies.

Finally, we note that our model bears some resemblance to the $p$-coverage version of the facility location problem~\citep{HT91improved}. Under this problem, we are given a set $D = \{v_1,\dots,v_n\}$ of $n$ demand points, and the goal is to select a subset $S \subseteq D$ of at most $p$ points such that each point in $D$ is at most a fixed distance away from some point in $S$. If we think of the demand set $D$ as the set of candidates and the set $S$ as the proxies, then the restricted variant of our problem is analogous to the $p$-coverage facility location problem. However, note that the unrestricted variant of our problem is significantly more general.

\section{Preliminaries}

For any natural number $s \in \mathbb{N}$, let $[s] \coloneqq \{1,\dots,s\}$. 
%
Our model involves three types of entities---\emph{candidates}, \emph{voters}, and \emph{proxies}---which are described below.

\paragraph{Candidates:} Let $\C = \{c_1,\dots,c_m\}$ denote a set of $m \in \mathbb{N}$ candidates that are arranged on a line segment $[0,1]$. We will overload notation to denote the position of the $i^\text{th}$ candidate also by $c_i \in [0,1]$. We will assume throughout that the extreme candidates are located at the endpoints of $[0,1]$ (i.e., $c_1 = 0$ and $c_m = 1$), and that all candidates have distinct locations (i.e., for any distinct $i,j \in [m]$, $c_i \neq c_j$).

For any pair of adjacent candidates $c_i$ and $c_{i+1}$, their \emph{candidate bisector} is the vertical line at $(c_i + c_{i+1})/2$. Notice that with $m$ candidates, there can, in general, be $\binom{m}{2}$ different bisectors. However, unless stated otherwise, the term `candidate bisector' will refer to a bisector between \emph{adjacent} candidates.

\paragraph{Voters:} A voter can be located anywhere in $[0,1]$ and is identified by its location. For any voter $v \in [0,1]$, its favorite candidate, denoted by $\top(v)$, is the candidate that is closest to it. That is, $\top(v) \coloneqq \arg\min_{c_i \in \C} |v - c_i|$, where ties are broken according to any fixed \emph{directionally consistent} tie-breaking rule (see \Cref{defn:Tie-Breaking}). Note that we do not assume the set of voters to be fixed in advance. This is because our results apply to any arbitrary collection of voters that could be located anywhere within $[0,1]$.

\begin{definition}[\textbf{Directionally consistent tie-breaking rule}]
A tie-breaking rule is a function $\tau: [0,1]^3 \rightarrow [0,1]$ that maps any triple $v,x,y \in [0,1]$ as follows:
$$\tau(v,x,y) =\begin{cases} 
x &\mbox{if } |v-x|<|v-y| \\
y & \mbox{if } |v-x|>|v-y| \\
\text{either $x$ or $y$} & \mbox{otherwise}.
\end{cases}$$
That is, a tie-breaking rule always maps to the point that is closer to $v$, and in case of a tie, picks exactly one of the two points. We say that a tie-breaking rule $\tau$ is \emph{directionally consistent} if, for any fixed $v \in [0,1]$, either $\tau$ always tie-breaks to the left of $v$ or always to the right of $v$ (the choice of direction could depend on $v$). That is, for any fixed $v \in [0,1]$, either $\tau(v,x,y) = x$ for all $x,y \in [0,1]$ such that $x \leq v \leq y$ and $|v-x|=|v-y|$, or $\tau(v,x,y) = y$ for all $x,y \in [0,1]$ such that $x \leq v \leq y$ and $|v-x|=|v-y|$.
\label{defn:Tie-Breaking}
\end{definition}

For any candidate $c_i \in \C$, its \emph{Voronoi cell} $V_i$ denotes the set of all voter locations $v \in [0,1]$ whose favorite candidate is $c_i$, i.e., $V_i \coloneqq \{v \in [0,1]: \top(v) = c_i \}$. Notice that the Voronoi cells $V_1,\dots,V_m$ induce a partition of the line segment $[0,1]$.

\paragraph{Proxies:} Our model also includes a finite set $\P$ of proxies whose role is to vote on behalf of the voters. Specifically, each voter $v$ delegates its vote to its \emph{nearest proxy}, denoted by $p^v\coloneqq \arg\min_{p \in \P} |v - p|$. Each proxy $p \in \R$ then votes for its favorite candidate, denoted by $\top(p)$, which is defined as the candidate closest to it, i.e., $\top(p) \coloneqq \arg\min_{c_i \in \C} |p-c_i|$. As before, ties are broken according to a directionally consistent tie-breaking rule (see \Cref{defn:Tie-Breaking}). We will often use the term \emph{proxy arrangement} to refer to a set of proxies placed on the real line.

\paragraph{Representative proxy arrangements:}
We will now formally define what it means for a proxy arrangement to be \emph{representative}. The definition is stated in terms of a parameter $\theta \in [0,1]$ that corresponds to a distance threshold. We will find it convenient to call two points $x,y \in [0,1]$ to be \emph{$\theta$-close} if $|x-y| \leq \theta$, and call them \emph{$\theta$-far} otherwise. 

Given any $\theta \in [0,1]$, we say that a proxy arrangement is \emph{$\theta$-representative} if for \emph{every} voter location, the favorite candidate of the voter is $\theta$-close to the favorite candidate of its closest proxy.\footnote{One might ask whether, in place of an additive approximation in the definition of $\theta$-representation, a multiplicative guarantee could be used instead. We note that a multiplicative approximation would implicitly use different ``$\theta$'' values for different voters, i.e., if a voter's favorite candidate is at position $z$, then its closest proxy's favorite should be within $[(1-\eps)\cdot z,(1+\eps)\cdot z]$. Thus, the voters whose favorite candidate is far away from $x=0$ have a greater ``slack'' in their representation (and are therefore loosely represented), whereas the voters whose favorite candidates are closer to $x=0$ enjoy a stronger representation. While this notion is mathematically well-defined, it does not seem as natural as our additive notion since the representation guarantee is no longer uniform for all voters.}

\begin{definition}[\textbf{$\theta$-representative proxy arrangement}]
An arrangement of proxies is \emph{$\theta$-representative} if for any voter $v \in [0,1]$, its favorite candidate is $\theta$-close to the favorite candidate of its nearest proxy. That is, for every $v \in [0,1]$, $|\top(v) - \top(p^{v})| \leq \theta$.
\label{defn:Theta_representative}
\end{definition}

This property essentially says that for every voter, no matter where they may be located, it must be that their preference is not too different from the preference of their closest proxy. Therefore, the voter should feel reasonably satisfied with the proxy arrangement, as their closest proxy (to whom they yield the power of their vote) will be somewhat representative of their interests. On the other hand, if an arrangement is {\em not} $\theta$-representative for some large $\theta$, this means that there are collections of voters which would be unhappy with this proxy arrangement, as the votes of the proxies would heavily disagree with the preferences of the voters.

We will now define the central problem studied in this paper.

\begin{definition}[\textbf{\ProxyVoting{}}]
An instance of the \ProxyVoting{} problem $\I = \langle \C, \theta \rangle$ is specified by a set of candidates $\C$ in $[0,1]$ and a parameter $\theta \in (0,1)$. The goal is to compute an arrangement of proxies that is \emph{$\theta$-representative} for every location $v \in [0,1]$.\footnote{We exclude the degenerate corner cases of $\theta = 0$ and $\theta=1$ from the definition. Indeed, if $\theta=0$, then it is easy to see that an optimal proxy arrangement requires as many proxies as candidates (realized by placing a proxy on each candidate). On the other hand, $\theta=1$ involves placing a single proxy anywhere in $[0,1]$.}
\label{defn:Proxy_Voting}
\end{definition}

We once again stress that a $\theta$-representative proxy arrangement should be representative of {\em all possible sets of voters}: When choosing appropriate proxies to represent the populace, we do not have information about the voter locations; we only know the candidate locations. No matter where the voters are located, or how they change in the future, the proxies will still be representative of their views.

It is worth pointing out that the assumption about the candidates and voters lying inside $[0,1]$ is without loss of generality. Indeed, by assuming the extreme candidates to be at $x=0$ and $x=1$, we are able to specify the parameter $\theta$ in absolute terms. Otherwise, we would need to define the distance threshold as ``$\theta$ times the distance between the extreme candidates''. The voters too, in principle, can be anywhere on the real line. However, note that for any voter located (weakly) to the left of $x=0$, its favorite candidate is at $x=0$. Furthermore, it can be shown that under an optimal proxy arrangement, the favorite proxy of any voter to the left of $x=0$ is, without loss of generality, also the favorite proxy of the voter \emph{at} $x=0$ (in other words, the leftmost proxy bisector is weakly to the right of $x=0$). Thus, $\theta$-representation for \emph{any} voter to the left of $x=0$ is subsumed by the same condition for the voter at $x=0$, and therefore, it suffices to assume that there are no voters in $x<0$. A similar argument can be made for voters to the right of $x=1$.

We will study two variants of \ProxyVoting{} which we call the \emph{restricted} and \emph{unrestricted} versions. Under the \emph{restricted} version of the problem, we require that the set of proxies must be a subset of candidate locations, that is, the proxies must lie on top of the candidates. In the \emph{unrestricted} version, the proxies can lie anywhere on the real line.

Note that the computational problem pertaining to \Cref{defn:Proxy_Voting} can be formalized as a decision as well as an optimization problem. The decision version asks whether, given an instance $\I$ and a natural number $k \in \mathbb{N}$, there exists a $\theta$-representative proxy arrangement for $\I$ consisting of at most $k$ proxies. The optimization version asks whether, given an instance $\I$, the \emph{optimal} $\theta$-representative proxy arrangement for $\I$ can be computed in polynomial time. An arrangement of $k$ proxies is \emph{optimal} for instance $\I$ if there is no other arrangement of $k-1$ or fewer proxies that is $\theta$-representative. We will write $\opt(\I)$ to denote the number of proxies in any optimal $\theta$-representative arrangement for the instance $\I$. To make the computational problem meaningful, we will assume throughout that all candidate locations in $\C$ and the parameter $\theta$ are rational.

\subsection*{Basic Properties and Examples}
Let us start by discussing some of the issues that arise when constructing $\theta$-representative sets of proxies, and attempt to build intuition about our techniques. First, consider the following lemma, which points out a precise relationship between candidate bisectors and proxy bisectors.

\begin{restatable}{lemma}{BisectorsNonCoincident}
Let $\I = \langle C,\theta \rangle$ denote an instance with a pair of adjacent candidates $c_i$ and $c_{i+1}$ that are $\theta$-far. Let $\P$ be any proxy arrangement such that none of the bisectors between adjacent proxies in $\P$ coincides with the bisector between $c_i$ and $c_{i+1}$. Then, $\P$
is not $\theta$-representative for $\I$.
\label{lem:Bisectors_Non_Coincident}
\end{restatable}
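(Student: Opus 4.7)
The plan is to derive a contradiction by exhibiting two voter locations arbitrarily close to the candidate bisector $b := (c_i + c_{i+1})/2$ whose nearest proxies coincide but whose favorite candidates are $c_i$ and $c_{i+1}$ respectively. Since $c_i$ and $c_{i+1}$ are adjacent and $\theta$-far, no single candidate can be $\theta$-close to both of them, which will break $\theta$-representativity.

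First, I would formalize the ``shared proxy'' step. The union of all proxy bisectors is a finite collection of points, so by hypothesis $b$ lies in the interior of some proxy Voronoi region. Hence there exists $\eps>0$ such that for every $v \in (b-\eps, b+\eps)$ the nearest proxy is a single fixed proxy $p^\ast \in \P$ (the directional-consistency of the tie-breaking rule is not even needed here, since $b$ itself is not a tie). Next, I would shrink $\eps$ if needed so that for $v_- := b-\eps$ and $v_+ := b+\eps$ the favorite candidate is determined purely by the adjacent pair: a short calculation using $b-c_i = c_{i+1}-b = (c_{i+1}-c_i)/2$ and the fact that $c_i, c_{i+1}$ are adjacent (so every other candidate lies outside the open interval $(c_i, c_{i+1})$) shows that $\top(v_-) = c_i$ and $\top(v_+) = c_{i+1}$ for all sufficiently small $\eps>0$.

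Now let $c^\ast := \top(p^\ast)$. Because $c_i$ and $c_{i+1}$ are adjacent candidates, exactly one of three cases applies: $c^\ast \le c_i$, $c^\ast \ge c_{i+1}$, or $c_i < c^\ast < c_{i+1}$. The last case is excluded by adjacency. In the first case, $|c^\ast - \top(v_+)| = |c^\ast - c_{i+1}| \ge c_{i+1} - c_i > \theta$; in the second case, $|c^\ast - \top(v_-)| = |c^\ast - c_i| \ge c_{i+1} - c_i > \theta$. Either way, one of $v_-, v_+$ is a voter whose favorite candidate is farther than $\theta$ from $\top(p^{v})$, contradicting $\theta$-representativity of $\P$.

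The only subtlety, and the part that requires the most care to state cleanly, is the neighborhood argument: ensuring that $\eps$ can be chosen small enough to simultaneously (i) keep both $v_\pm$ inside the common proxy cell containing $b$, and (ii) force the favorites of $v_\pm$ to be exactly $c_i$ and $c_{i+1}$. Both conditions reduce to $\eps$ being smaller than the (positive) distances from $b$ to the nearest proxy bisector and to the nearest non-adjacent candidate bisector, so a single uniform choice of $\eps$ works. Everything else is a direct triangle-inequality computation.
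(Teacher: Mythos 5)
Your proof is correct and takes essentially the same approach as the paper: both arguments place two voters on either side of the candidate bisector, close enough to share a common nearest proxy while having favorites $c_i$ and $c_{i+1}$ respectively, and conclude that the shared proxy's single favorite candidate cannot be $\theta$-close to both. Your explicit case analysis on $\top(p^\ast)$ and the uniform choice of $\eps$ are spelled out in slightly more detail than in the paper (which uses a specific $d$ defined via the nearest proxy bisector), but the underlying idea is identical.
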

\begin{proof}
Let $p_1,\dots,p_k$ denote the proxies in $\P$. Among all proxy bisectors between adjacent proxies, let the one between $p_j$ and $p_{j+1}$ be closest to the candidate bisector between $c_i$ and $c_{i+1}$ (see \Cref{fig:Non_Coincident_Bisectors}). Without loss of generality, we can assume that this proxy bisector is to the right of the candidate bisector, i.e., $\frac{c_i+c_{i+1}}{2} < \frac{p_j+p_{j+1}}{2}$. Let $d \coloneqq \min\{\frac{p_j+p_{j+1}}{2}, c_{i+1}\} - \frac{c_i+c_{i+1}}{2}$ and observe that $d>0$.

\begin{figure}[h]
\centering
\begin{tikzpicture}[line width=0.8pt]
	\tikzset{candidate/.style = {draw,shape=rectangle,scale=1.1}}
	\tikzset{voter/.style = {circle, fill=blue, minimum size=4pt, inner sep=0pt, outer sep=0pt}}
	\tikzmath{\proxyYoffset=-0.3;}
		\draw[help lines, dashed, step = 0.5cm] (-3, -0.45) grid (3, 0.45);
		\node[candidate,label=above:{$c_i$}] at (-2,0){};
		\node[candidate,label=above:{$c_{i+1}$}] at (2,0) {};
	    \draw[line width=1.2pt] (0, -0.4) -- (0, 0.4);
	    \draw[line width=1.2pt, red, dashed] (1, -0.4) -- (1, 0.4);
		\node[voter,label=above:{$v^R$}] at (0.5,0) {};
		\node[voter,label=above:{$v^L$}] at (-0.5,0) {};
	    \dimline[extension start length=0,extension end length=0] {(-2,-0.5)}{(2,-0.5)}{$>\theta$};
\end{tikzpicture}
\caption{Failure of $\theta$-representation when none of the proxy bisectors between adjacent proxies coincides with the candidate bisector between $\theta$-far candidates. The candidate bisector is shown as solid vertical line, and its closest proxy bisector is shown as a dashed vertical line in red.}
\label{fig:Non_Coincident_Bisectors}
\end{figure}
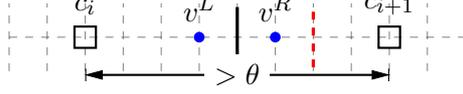

Consider a pair of voters $v^L \in (\frac{c_i+c_{i+1}}{2} - d, \frac{c_i+c_{i+1}}{2})$ and $v^R \in (\frac{c_i+c_{i+1}}{2}, \frac{c_i+c_{i+1}}{2} + d)$. Notice that the favorite candidates of $v^L$ and $v^R$ are $c_i$ and $c_{i+1}$, respectively. Furthermore, since the closest proxy bisector is at least a distance $d$ away from the candidate bisector, the two voters must have the same closest proxy, i.e., $p^{v^L} = p^{v^R} = p_j$. Then, regardless of the favorite candidate of $p_j$, $\theta$-representation must be violated for at least one of $v^L$ or $v^R$.
\end{proof}

It is relevant to note that the implication in \Cref{lem:Bisectors_Non_Coincident} holds under both restricted and unrestricted positioning assumptions. Because of \Cref{lem:Bisectors_Non_Coincident}, it is easy to see that attempting something trivial like placing proxies at equal distances will not result in a $\theta$-representative arrangement, as shown by \Cref{eg:Evenly_Spaced} below. Essentially, the example uses the observation that the proxy bisectors under an equidistant proxy arrangement can only occur in certain fixed locations. Thus, by adversarially placing a pair of $\theta$-far candidates in a way that their candidate bisector does not coincide with any of the proxy bisectors, we can use \Cref{lem:Bisectors_Non_Coincident} to demonstrate a violation of $\theta$-representation. This further motivates the need for sophisticated algorithms for computing proxy arrangements, such as those discussed in upcoming sections.

\begin{example}[\textbf{Evenly spaced proxies may not be $\theta$-representative}]
Suppose we are given some $\theta \in (0,1)$ and a budget of $k \in \mathbb{N}$ proxies. 
We will construct an instance where evenly spacing these $k$ proxies, i.e., placing the proxies at $\frac{\ell}{k-1}$ for every $\ell \in \{0,1,\dots,k-1\}$, fails to be $\theta$-representative (we will assume, without loss of generality, that $k \geq 2$). Notice that the bisectors between adjacent proxies are located at $\frac{2\ell-1}{2(k-1)}$ for every $\ell \in [k-1]$.

Consider a sufficiently small $\eps > 0$ such that $[\theta,\theta+\eps] \subseteq [\frac{\ell-1}{k-1},\frac{\ell}{k-1}]$ for some $\ell \in [k-1]$. Note that such a choice of $\eps$ must exist since $\theta \in (0,1)$. Pick a rational point $x \in (\theta,\theta+\eps) \setminus \cup_{i \in [\ell]} \{\frac{2i-1}{k-1}\}$. 

Consider an instance with three candidates $c_1$, $c_2$, and $c_3$ that are placed at $0$, $x$, and $1$, respectively. Notice that $c_1$ and $c_2$ are $\theta$-far, and that the bisector between $c_1$ and $c_2$, which is located at $x/2$, does not coincide with any proxy bisector between adjacent proxies. Therefore, by \Cref{lem:Bisectors_Non_Coincident}, the evenly spaced proxy arrangement fails to be $\theta$-representative.
\label{eg:Evenly_Spaced}
\end{example}

We will now describe our results for restricted (\Cref{sec:Restricted_Proxies})
 and unrestricted positioning of proxies (\Cref{sec:Unrestricted_Proxies}).

\section{Restricted Positioning of Proxies}
\label{sec:Restricted_Proxies}

This section provides an algorithm for computing an optimal proxy arrangement, followed by upper and lower bounds on the number of proxies needed to be $\theta$-representative.

\subsection{Algorithm for Computing Optimal Number of Restricted Proxies}

Our first result for restricted positioning (\Cref{thm:Opt_Restricted}) shows that a $\theta$-representative arrangement with the smallest number of proxies can be computed in polynomial time. While the details are somewhat complex, the proof is via an essentially straightforward dynamic programming algorithm, and is presented in \Cref{subsubsec:Proof_Opt_Restricted} in the appendix.

\begin{restatable}[\textbf{Optimal proxy arrangement under restricted positioning}]{theorem}{RestrictedProxiesOptimal}
There is a polynomial-time algorithm that, given any instance $\langle \C, \theta \rangle$ of \ProxyVoting{} as input, terminates in polynomial time and returns an optimal $\theta$-representative proxy arrangement satisfying restricted positioning.
\label{thm:Opt_Restricted}
\end{restatable}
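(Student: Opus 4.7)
The plan is to design a left-to-right dynamic programming algorithm indexed by pairs of candidate positions. The state $(j', j)$ will track the two most recently placed proxies (at candidate locations $c_{j'}$ and $c_j$ with $j' < j$); these two proxies together pin down the proxy bisector $(c_{j'}+c_j)/2$, which is precisely the point where the responsibility of proxy $c_{j'}$ ends and the responsibility of proxy $c_j$ begins. I will introduce a sentinel $j' = 0$ meaning ``no previous proxy,'' so that the leftmost proxy's responsibility begins at $0$, and a sentinel $j = m+1$ meaning ``no next proxy,'' so that the rightmost proxy's responsibility extends all the way to $1$.

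The observation that makes this DP clean is specific to the restricted case: a proxy at candidate location $c_j$ has $c_j$ itself as its favorite candidate, so $\theta$-representation reduces to a purely local condition---for every Voronoi cell $V_i$ intersecting the responsibility interval of proxy $c_j$, we must have $|c_i - c_j| \leq \theta$. Define $f(j', j)$ as the minimum number of proxies such that the two rightmost are at $c_{j'}$ and $c_j$ and every voter in $[0, (c_{j'}+c_j)/2]$ is $\theta$-represented. In any such state, the responsibility of $c_{j'}$ is fully determined (its right boundary is $(c_{j'}+c_j)/2$), whereas that of $c_j$ is still pending.

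The transition $f(j, j'') = \min_{j'}\,[f(j', j) + 1]$, obtained by appending a proxy at $c_{j''}$, is declared valid whenever every Voronoi cell intersecting $[(c_{j'}+c_j)/2, (c_j+c_{j''})/2]$ has its candidate within $\theta$ of $c_j$. Initialize $f(0, j_1) = 1$ for each $j_1 \in [m]$; the final answer is $\min_{j'} f(j', m+1)$, where the transition into the sentinel $m+1$ verifies that the rightmost proxy $c_{j'}$ has valid responsibility extending to $1$. I claim that this local check already enforces \Cref{lem:Bisectors_Non_Coincident}: if an adjacent $\theta$-far pair $(c_i, c_{i+1})$ were to lie entirely inside the responsibility of one proxy $c_j$, then $c_j$ would need to satisfy $|c_i - c_j| \leq \theta$ and $|c_{i+1} - c_j| \leq \theta$ simultaneously, forcing $c_j$ strictly between $c_i$ and $c_{i+1}$---impossible, since these candidates are adjacent. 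Hence the DP is forced to split responsibility across every $\theta$-far boundary, automatically aligning a proxy bisector at $(c_i+c_{i+1})/2$ as required.

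With $O(m^2)$ states, $O(m)$ transitions per state, and $O(m)$ time per feasibility check (improvable to $O(1)$ after precomputing, for each candidate, the range of candidate indices within $\theta$ on either side), the algorithm runs in polynomial time. Correctness follows by induction on the number of proxies: valid DP paths are in bijection with $\theta$-representative arrangements. I expect the main obstacle to be the bookkeeping around the two sentinel cases and carefully arguing that the transition's local feasibility check is exactly equivalent to global $\theta$-representation on the prefix $[0, (c_{j'}+c_j)/2]$; this should reduce, via the characterization above together with \Cref{lem:Bisectors_Non_Coincident}, to verifying that in any $\theta$-representative arrangement each proxy's responsibility interval contains only candidates within $\theta$ of the proxy.
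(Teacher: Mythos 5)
Your proposal is correct and follows essentially the same route as the paper: a left-to-right dynamic program over candidate locations for the proxies, exploiting the fact that under restricted positioning each proxy's favorite is itself, so $\theta$-representation decomposes into a local check per proxy's responsibility interval. The only differences are cosmetic---you index states by the pair of most recent proxies and minimize the count directly, while the paper indexes by (rightmost proxy, number of proxies) and iterates over the predecessor in the transition, and your ``every Voronoi cell intersecting the responsibility interval is $\theta$-close to the proxy'' check is logically equivalent to the paper's two-case bisector analysis (just take care that the open/closed endpoints of the responsibility intervals respect the directionally consistent tie-breaking, which is the degenerate case the paper's case (i) handles explicitly).
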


\subsection{Upper and Lower Bounds for Restricted Positioning of Proxies}

Although the optimum number of proxies can be computed efficiently, these algorithms do not provide any insight into \emph{how many} proxies are actually necessary in order to be $\theta$-representative for any given set of candidates and any set of voters. Because of this, we now prove upper and lower bounds on the number of proxies needed to achieve $\theta$-representation.

\begin{restatable}[\textbf{Upper bound under restricted positioning}]{theorem}{UpperBoundRestricted}
Given any instance $\langle \C, \theta \rangle$ of \ProxyVoting{}, there exists a $\theta$-representative proxy arrangement satisfying restricted positioning that consists of at most $2(\frac{1}{\theta} - 1)$ proxies if $\frac{1}{\theta} \in \mathbb{N}$, and at most $2\lfloor\frac{1}{\theta}\rfloor$ proxies otherwise. Furthermore, such an arrangement can be computed in polynomial time.
\label{thm:UpperBound_Restricted}
\end{restatable}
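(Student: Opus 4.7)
The plan is to exhibit an explicit proxy arrangement matching the claimed size, then verify it is $\theta$-representative. Set $K := \frac{1}{\theta} - 1$ when $\frac{1}{\theta} \in \mathbb{N}$ and $K := \lfloor\frac{1}{\theta}\rfloor$ otherwise. For each $j \in \{1,\dots,K\}$ let $b_j := j\theta$; by the choice of $K$ we have $0 < b_j < 1$, so we can define $L_j := \max\{c \in \C : c \leq b_j\}$ and $R_j := \min\{c \in \C : c > b_j\}$, both well-defined since $c_1 = 0 \leq b_j < 1 = c_m$. Take $\P := \bigcup_{j=1}^K \{L_j, R_j\}$. Trivially $|\P| \leq 2K$, which matches both claimed bounds, and $\P$ is computable in polynomial time (e.g., in $O(mK)$ by scanning $\C$ for each $j$).

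To verify $\theta$-representation, fix any voter $v$ and let $c := \top(v)$. If $c \in \P$, then $c$ is also the nearest proxy to $v$ (the proxy set being a subset of candidates, the nearest candidate is automatically a nearest proxy), so $|c - p^v| = 0$. Otherwise, locate $c$ in the partition $[0, b_1) \cup [b_1, b_2) \cup \cdots \cup [b_K, 1]$. The central case is $c \in [b_{j-1}, b_j)$ for some interior $j \in \{2, \dots, K\}$: a short argument rules out $c = b_{j-1}$ (else $L_{j-1} = c$ would be a proxy), which forces $R_{j-1} < c < L_j$. Consequently $V_c \subseteq (R_{j-1}, L_j)$, and since no other proxy lies strictly between $R_{j-1}$ and $L_j$ (a direct consequence of the definitions of $L_j$ and $R_{j-1}$), the closest proxy $p^v$ must be $R_{j-1}$ or $L_j$. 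Both lie in the length-$\theta$ interval $(b_{j-1}, b_j]$ alongside $c$, giving $|c - p^v| \leq L_j - R_{j-1} < b_j - b_{j-1} = \theta$. The boundary cases $c \in [0, b_1)$ and $c \in [b_K, 1]$ are symmetric: $V_c$ is pushed to the extreme left (resp. right) of the arrangement, forcing $p^v = L_1$ (resp. $p^v = R_K$), and $|c - L_1| \leq L_1 \leq b_1 = \theta$ (resp. $|c - R_K| < 1 - b_K \leq \theta$, where the final inequality is an equality exactly in the integer case).

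The main obstacle is the bookkeeping around degenerate configurations. When some interval $(b_{j-1}, b_j]$ contains no candidates, $L_j$ collapses onto $L_{j-1}$ and $R_{j-1}$ onto $R_j$, which only decreases $|\P|$; this ``gap'' configuration corresponds to a $\theta$-far adjacent pair, and \Cref{lem:Bisectors_Non_Coincident} is automatically respected because the coinciding $L_j$--$R_{j-1}$ pair produces a proxy bisector exactly at the midpoint of the $\theta$-far candidate pair. Voters whose favorite lies exactly on some $b_j$ force $c = L_j$, reducing to the trivial ``$c$ is a proxy'' case. Modulo this case-splitting, the argument is straightforward, and the polynomial running time is immediate from the construction.
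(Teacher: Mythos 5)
Your proof is correct, and it reaches the bound by a genuinely different construction than the paper's. The paper's Algorithm~\ref{alg:UpperBound_Restricted} is an adaptive greedy sweep: starting from $c_1$ as a reference, it cuts at the candidate bisector straddling the point $\mathrm{reference}+\theta$ and restarts from the candidate just to the right of that cut; the bound of $\frac{1}{\theta}-1$ (resp.\ $\lfloor\frac{1}{\theta}\rfloor$) on the number of cuts is then proved by observing that consecutive references are more than $\theta$ apart. You instead use an oblivious grid: cut points fixed in advance at $b_j=j\theta$, each snapped to the straddling candidates $L_j,R_j$. The verification is structurally parallel in the two arguments --- all candidates strictly between consecutive cuts are pairwise $\theta$-close, every proxy bisector coincides with a candidate bisector, so each voter's nearest proxy lies in the same block as its favorite candidate --- and the counting is the same ($K$ cuts, at most two proxies each). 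Your version is arguably more elementary: the number of cuts is immediate from the definition rather than extracted from the dynamics of the sweep, and the $K$ lookups are independent of one another. What the greedy buys in exchange is adaptivity: its $j$-th cut lies weakly to the right of your $b_j$, so it never uses more bisectors than the grid and can use far fewer; that adaptivity is precisely what powers the $3$-approximation guarantee of \Cref{cor:Approx_Algo_Unrestricted} (consecutive reference candidates are more than $\theta$ apart, so alternate segments are disjoint), an argument your grid would not support since its nonempty blocks need not be well separated. For the worst-case statement of \Cref{thm:UpperBound_Restricted} itself, both routes are complete and give the identical bound. One small point worth making explicit: when $\top(v)\in\P$ you assert $p^v=\top(v)$, which requires that the same directionally consistent tie-breaking direction is used for candidates and for proxies at each fixed $v$; the paper's own proofs rely on the same convention, so this is not a gap.
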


\begin{proofsketch} The detailed proof of \Cref{thm:UpperBound_Restricted} is presented in \Cref{subsubsec:Proof_UpperBound_Restricted} in the appendix. Here, we will describe the main steps in our algorithm by means of the example shown in \Cref{subfig:Example_UpperBound_Restricted_Instance}.

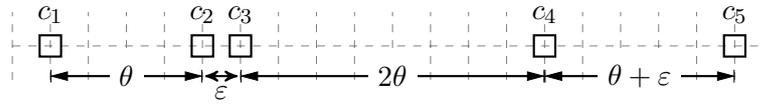
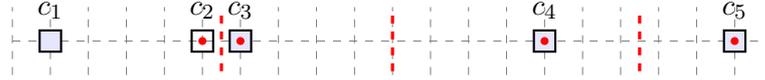
\begin{figure}[h]
\centering
\begin{subfigure}{\textwidth}
\centering
\begin{tikzpicture}[line width=0.8pt]
	\tikzset{candidate/.style = {draw,shape=rectangle,scale=1.1}}
		\draw[help lines, dashed, step = 0.5cm] (-6, -0.45) grid (4, 0.45);
		\node[candidate,label=above:{$c_1$}] at (-5.5,0){};
		\node[candidate,label=above:{$c_2$}] at (-3.5,0) {};
		\node[candidate,label=above:{$c_3$}] at (-3,0) {};
		\node[candidate,label=above:{$c_4$}] at (1,0) {};
		\node[candidate,label=above:{$c_5$}] at (3.5,0) {};
		\dimline[extension start length=0,extension end length=0]{(-5.5,-0.4)}{(-3.5,-0.4)}{$\theta$};
		\draw[>=stealth', <->, shorten <=1pt, shorten >=1pt] (-3.5,-0.4) -- node[below] {$\eps$} ++ (0.5,0);
		\dimline[extension start length=0,extension end length=0]{(-3,-0.4)}{(1,-0.4)}{$2\theta$};
		\dimline[extension start length=0,extension end length=0]{(1,-0.4)}{(3.5,-0.4)}{$\theta+\eps$};
		\node at (4.375,0) {};
\end{tikzpicture}
\caption{An instance of \ProxyVoting{}.}
\vspace{0.1in}
\label{subfig:Example_UpperBound_Restricted_Instance}
\end{subfigure}
\begin{subfigure}{\textwidth}
\centering
\begin{tikzpicture}[line width=0.8pt]
	\tikzset{candidate/.style = {draw,shape=rectangle,scale=1.1}}
	\tikzset{proxy/.style = {circle, fill=red, minimum size=3pt, inner sep=0pt, outer sep=0pt}}
	\tikzmath{\proxyYoffset=0;\proxyXoffset=0.2;}
		\draw[help lines, dashed, step = 0.5cm] (-6, -0.45) grid (4, 0.45);
		\node[candidate,fill=blue!10,label=above:{$c_1$}] at (-5.5,0){};
		\node[candidate,label=above:{$c_2$}] at (-3.5,0) {};
		\node[candidate,fill=blue!10,label=above:{$c_3$}] at (-3,0) {};
		\node[candidate,fill=blue!10,label=above:{$c_4$}] at (1,0) {};
		\node[candidate,fill=blue!10,label=above:{$c_5$}] at (3.5,0) {};
	    \draw[line width=1.2pt, red, dashed] (-3.25, -0.4) -- (-3.25, 0.4);
	    \draw[line width=1.2pt, red, dashed] (-1, -0.4) -- (-1, 0.4);
	    \draw[line width=1.2pt, red, dashed] (2.25, -0.4) -- (2.25, 0.4);
		\node[proxy] at (-3.5,\proxyYoffset) {};
		\node[proxy] at (-3,\proxyYoffset) {};
		\node[proxy] at (1,\proxyYoffset) {};
		\node[proxy] at (3.5,\proxyYoffset) {};
		\node at (4.375,0) {};
\end{tikzpicture}
\caption{The proxy bisectors computed by the algorithm are shown as dashed red lines, and the reference candidates are highlighted in blue. The locations of proxies are shown as solid red circles.}
\label{subfig:Example_UpperBound_Restricted_Proxy_Bisectors}
\end{subfigure}
\caption{Illustrating the execution of the algorithm in \Cref{thm:UpperBound_Restricted} on a toy example.}
\label{fig:Example_UpperBound_Restricted}
\end{figure}

To compute the desired proxy arrangement, our algorithm (see Algorithm~\ref{alg:UpperBound_Restricted} in the appendix) computes a set of \emph{proxy bisectors} between adjacent proxies, according to the following strategy. Starting from the leftmost candidate as the \emph{reference}, the bisector between the furthest $\theta$-close and the closest $\theta$-far candidates from the reference on the right is chosen as the first proxy bisector. (In \Cref{subfig:Example_UpperBound_Restricted_Proxy_Bisectors}, this is the bisector between $c_2$ and $c_3$.) 
In the next iteration, the candidate immediately to the right of the previous proxy bisector is chosen as the new reference in order to compute the next proxy bisector. (Thus, when $c_3$ is the reference in the next iteration, the furthest $\theta$-close candidate from $c_3$ on the right is $c_3$ itself, and $c_4$ is the closest $\theta$-far candidate. Therefore, the proxy bisector is chosen as the candidate bisector between $c_3$ and $c_4$.) This process repeats until the rightmost candidate $c_m$ either becomes a reference or is $\theta$-close to one.

Since, by construction, all proxy bisectors coincide with candidate bisectors, the desired proxy arrangement is realized by placing proxies on the equidistant candidates next to each proxy bisector. It is easy to show that the number of proxy bisectors is at most $\lfloor\frac{1}{\theta}\rfloor$, and therefore the number of proxies is at most $2\lfloor\frac{1}{\theta}\rfloor$. The $\theta$-representation of this proxy arrangement follows from the fact that all candidates between consecutive proxy bisectors are $\theta$-close.
\end{proofsketch}

\Cref{prop:LowerBound_Restricted} shows that the upper bound derived in \Cref{thm:UpperBound_Restricted} is tight.

\begin{restatable}[\textbf{Lower bound under restricted positioning}]{prop}{LowerBoundRestricted}
Given any $\theta \in (0,1)$, there exists an instance for which any $\theta$-representative proxy arrangement under restricted positioning requires at least $2(\frac{1}{\theta}-1)$ proxies if $\frac{1}{\theta} \in \mathbb{N}$, and $2\lfloor\frac{1}{\theta}\rfloor$ proxies otherwise.
\label{prop:LowerBound_Restricted}
\end{restatable}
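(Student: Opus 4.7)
The plan is to exhibit an explicit candidate arrangement for which any $\theta$-representative restricted proxy arrangement is forced to place a proxy on every candidate. Set $g = \lfloor\frac{1}{\theta}\rfloor$ when $\frac{1}{\theta}\notin\mathbb{N}$ and $g = \frac{1}{\theta} - 1$ otherwise, so that $1 - g\theta > 0$ in both cases; the target lower bound is $2g$. I would then place $2g$ candidates forming $g+1$ ``slabs'' separated by $g$ $\theta$-far gaps of width $\sigma = \theta + \epsilon$ for a small rational $\epsilon > 0$: the leftmost and rightmost slabs are singletons at $c_1 = 0$ and $c_{2g} = 1$, while each interior slab $s \in \{2, \ldots, g\}$ consists of a close pair $\{c_{2s-2}, c_{2s-1}\}$ of width $w_s$. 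The widths $w_2, \ldots, w_g$ are chosen as distinct small positive rationals summing to $1 - g\sigma$, and sufficiently generic that no extraneous linear combination of $\sigma$ and the $w_s$'s equates two distinct candidate pair midpoints.

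With this arrangement, the $g$ adjacent pairs $(c_{2s-1}, c_{2s})$ for $s = 1, \ldots, g$ are precisely the $\theta$-far adjacent candidate pairs, yielding $g$ distinct candidate bisectors $B_s = (c_{2s-1}+c_{2s})/2$. By \Cref{lem:Bisectors_Non_Coincident}, any $\theta$-representative proxy arrangement must have a proxy bisector coinciding with each $B_s$; under restricted positioning this means two consecutive proxies (hence two candidates) summing to $2B_s$. The crux of the argument is to show that, for each $s$, the only candidate pair $(c_a, c_b)$ with $c_a + c_b = 2B_s$ is the designated pair $(c_{2s-1}, c_{2s})$. This is a finite case analysis over which slabs contain $c_a$ and $c_b$: the constraint $c_{2s-1} - c_a = c_b - c_{2s}$ reduces to linear equations among the $w_s$'s and $\sigma$ (for example, $(c_{2s-2}, c_{2s+1})$ would match iff $w_s = w_{s+1}$), each of which is excluded by the generic choice of parameters; the boundary cases $s=1$ and $s=g$ are immediate since the neighbouring slab on one side is a singleton. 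Hence both $c_{2s-1}$ and $c_{2s}$ must lie in $\P$, and because no candidate lies strictly between them they automatically appear as consecutive proxies. Taking the union over $s$ gives $\P \supseteq \{c_1, \ldots, c_{2g}\}$, so $|\P| \geq 2g$. Since placing a proxy at every candidate trivially yields a $\theta$-representative arrangement, the bound $2g$ is tight and matches the claim.

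The main obstacle is the uniqueness step. A fully explicit parameter choice that works is $w_s = \alpha \cdot 2^{-s}$ for a rational scaling constant $\alpha$ chosen so that $\sum_{s=2}^{g} w_s = 1 - g\sigma$, together with a small rational $\epsilon$; this ensures all relevant subset sums of the $w_s$'s are pairwise distinct and incompatible with $\sigma$-based combinations, ruling out every accidental midpoint coincidence that could arise in the case analysis.
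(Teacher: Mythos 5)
Your proposal is correct and follows essentially the same strategy as the paper's proof: both construct a chain of $\theta$-far adjacent candidate pairs separated by small clusters and invoke the contrapositive of \Cref{lem:Bisectors_Non_Coincident} to argue that, under restricted positioning, each such candidate bisector can only be realized as a proxy bisector by placing proxies on the two candidates flanking it, forcing $2\lfloor\frac{1}{\theta}\rfloor$ (resp.\ $2(\frac{1}{\theta}-1)$) proxies in total. The only difference is how the ``unique realizing pair'' property is enforced---the paper alternates the inner cluster gaps between $\eps$ and $2\eps$ and argues sequentially from left to right, whereas you use distinct dyadic slab widths and rule out midpoint coincidences globally; both devices work, though your final genericity claim deserves a one-line verification that each potential coincidence is a non-degenerate affine equation in $\eps$, since $\alpha$ itself depends on $\eps$ through the normalization $\sum_s w_s = 1 - g\sigma$.
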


\begin{proof}
Let $p \in \mathbb{N}$ denote the unique positive integer such that $\frac{1}{p} \leq \theta < \frac{1}{p-1}$. Observe that $p-1$ equals $\frac{1}{\theta} - 1$ when $\frac{1}{\theta} \in \mathbb{N}$, and equals $\lfloor\frac{1}{\theta}\rfloor$ otherwise. Therefore, it suffices to show that any $\theta$-representative arrangement requires $2p-2$ proxies.

Our construction of the lower bound instance will depend on whether $p$ is even or odd. Specifically, let $\eps \coloneqq \frac{1-(p-1)\theta}{5(p-1)/2}$ (if $p$ is odd) or $\eps \coloneqq \frac{1-(p-1)\theta}{(5p/2-3)}$ (if $p$ is even), and observe that $\eps > 0$ in both cases. The distinction between even and odd cases is made in order to ensure that the distance between the extreme candidates is equal to $1$.

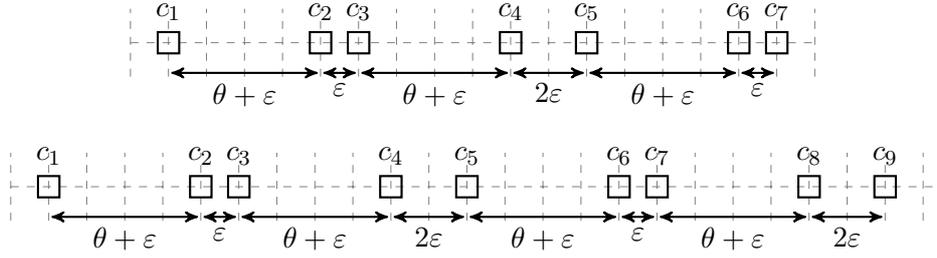
\begin{figure}[h]
\centering
\begin{subfigure}{\textwidth}
\centering
\begin{tikzpicture}[line width=0.8pt]
	\tikzset{candidate/.style = {draw,shape=rectangle,scale=1.1}}
		\draw[help lines, dashed, step = 0.5cm] (-6, -0.45) grid (3, 0.45);
		\node[candidate,label=above:{$c_1$}] at (-5.5,0){};
		\node[candidate,label=above:{$c_2$}] at (-3.5,0) {};
		\node[candidate,label=above:{$c_3$}] at (-3,0) {};
		\node[candidate,label=above:{$c_4$}] at (-1,0) {};
		\node[candidate,label=above:{$c_5$}] at (0,0) {};
		\node[candidate,label=above:{$c_6$}] at (2,0) {};
		\node[candidate,label=above:{$c_7$}] at (2.5,0) {};
	    \draw[>=stealth', <->, shorten <=1pt, shorten >=1pt] (-5.5,-0.4) -- node[below] {$\theta+\eps$} ++ (2,0);
		\draw[>=stealth', <->, shorten <=1pt, shorten >=1pt] (-3.5,-0.4) -- node[below] {$\eps$} ++ (0.5,0);
	    \draw[>=stealth', <->, shorten <=1pt, shorten >=1pt] (-3,-0.4) -- node[below] {$\theta+\eps$} ++ (2,0);
	    \draw[>=stealth', <->, shorten <=1pt, shorten >=1pt] (-1,-0.4) -- node[below] {$2\eps$} ++ (1,0);
	    \draw[>=stealth', <->, shorten <=1pt, shorten >=1pt] (0,-0.4) -- node[below] {$\theta+\eps$} ++ (2,0);
	    \draw[>=stealth', <->, shorten <=1pt, shorten >=1pt] (2,-0.4) -- node[below] {$\eps$} ++ (0.5,0);
\end{tikzpicture}
\vspace{0.1in}
\end{subfigure}
\begin{subfigure}{\textwidth}
\centering
\begin{tikzpicture}[line width=0.8pt]
	\tikzset{candidate/.style = {draw,shape=rectangle,scale=1.1}}
		\draw[help lines, dashed, step = 0.5cm] (-6, -0.45) grid (6.25, 0.45);
		\node[candidate,label=above:{$c_1$}] at (-5.5,0){};
		\node[candidate,label=above:{$c_2$}] at (-3.5,0) {};
		\node[candidate,label=above:{$c_3$}] at (-3,0) {};
		\node[candidate,label=above:{$c_4$}] at (-1,0) {};
		\node[candidate,label=above:{$c_5$}] at (0,0) {};
		\node[candidate,label=above:{$c_6$}] at (2,0) {};
		\node[candidate,label=above:{$c_7$}] at (2.5,0) {};
		\node[candidate,label=above:{$c_8$}] at (4.5,0) {};
		\node[candidate,label=above:{$c_9$}] at (5.5,0) {};
	    \draw[>=stealth', <->, shorten <=1pt, shorten >=1pt] (-5.5,-0.4) -- node[below] {$\theta+\eps$} ++ (2,0);
		\draw[>=stealth', <->, shorten <=1pt, shorten >=1pt] (-3.5,-0.4) -- node[below] {$\eps$} ++ (0.5,0);
	    \draw[>=stealth', <->, shorten <=1pt, shorten >=1pt] (-3,-0.4) -- node[below] {$\theta+\eps$} ++ (2,0);
	    \draw[>=stealth', <->, shorten <=1pt, shorten >=1pt] (-1,-0.4) -- node[below] {$2\eps$} ++ (1,0);
	    \draw[>=stealth', <->, shorten <=1pt, shorten >=1pt] (0,-0.4) -- node[below] {$\theta+\eps$} ++ (2,0);
	    \draw[>=stealth', <->, shorten <=1pt, shorten >=1pt] (2,-0.4) -- node[below] {$\eps$} ++ (0.5,0);
	    \draw[>=stealth', <->, shorten <=1pt, shorten >=1pt] (2.5,-0.4) -- node[below] {$\theta+\eps$} ++ (2,0);
	    \draw[>=stealth', <->, shorten <=1pt, shorten >=1pt] (4.5,-0.4) -- node[below] {$2\eps$} ++ (1,0);
\end{tikzpicture}
\end{subfigure}
\caption{Lower bound instance in \Cref{prop:LowerBound_Restricted} when $\frac{1}{4} \leq \theta < \frac{1}{3}$ (top) and $\frac{1}{5} \leq \theta < \frac{1}{4}$ (bottom).}
\label{fig:LowerBound_Restricted}
\end{figure}

The instance consists of the candidate $c_1$ on the left, followed by a sequence of pairs of candidates $\{c_{2i},c_{2i+1}\}$ for $i \in \mathbb{N}$ such that each pair is at a distance of $\theta+\eps$ from the preceding pair, and the gap between the candidates in each pair alternates between $\eps$ and $2\eps$ (see \Cref{fig:LowerBound_Restricted} for an example). In other words, the sequence of distances between adjacent pair of candidates from left to right is given by $\theta+\eps,\eps,\theta+\eps,2\eps,\theta+\eps,\eps$ and so on. The rightmost candidate is $c_{2p-1}$. It is easy to see that all candidate locations are rational, and that the distance between the extreme candidates is equal to $1$.

The fact that the candidate pairs $\{c_1,c_2\}$, $\{c_3,c_4\}$, $\{c_5,c_6\}$, and so on are each $\theta$-far necessitates that for every candidate bisector between these pairs, there must exist a proxy bisector between adjacent proxies that coincides with it~(contrapositive of \Cref{lem:Bisectors_Non_Coincident}). Furthermore, due to the alternating gaps property, any $\theta$-representative proxy arrangement is required to place proxies on every candidate except for the last candidate $c_{2p-1}$ (see \Cref{subsubsec:Proof_LowerBound_Restricted} 
in the appendix for a detailed argument). This implies that any $\theta$-representative proxy arrangement for the above instance requires at least $2p-2$ proxies.
\end{proof}

\section{Unrestricted Positioning of Proxies}
\label{sec:Unrestricted_Proxies}

Let us now turn our attention to the \emph{unrestricted} setting wherein the proxies can be placed anywhere on the real line. Clearly, any feasible proxy arrangement in the restricted model is also feasible under the unrestricted model. Therefore, the optimal number of proxies under the latter setting is at most that under the former; in fact, \Cref{eg:Unrestricted_Beats_Restricted} shows that the separation between the two models can be strict.

\begin{example}[\textbf{Unrestricted positioning uses fewer proxies than restricted}]

Fix $\theta = \frac{1}{3}$ and let $\eps > 0$ be sufficiently small. Consider the instance shown in \Cref{fig:Unrestricted_Beats_Restricted} consisting of four candidates $c_1=0$, $c_2=\frac{1}{3}+\eps$, $c_3=\frac{2}{3}-\eps$, and $c_4=1$. Notice that the adjacent candidate pairs $\{c_1,c_2\}$ and $\{c_3,c_4\}$ are $\theta$-far. Thus, by the contrapositive of \Cref{lem:Bisectors_Non_Coincident}, any $\theta$-representative proxy arrangement must have bisectors between adjacent proxies that coincide with the candidate bisectors between these pairs.

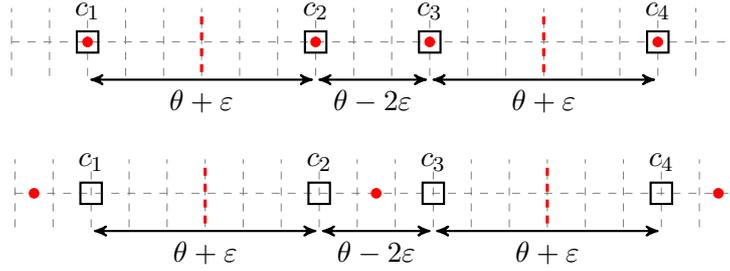
\begin{figure}[h]
\centering
\begin{subfigure}{\textwidth}
\centering
\begin{tikzpicture}[line width=0.8pt]
	\tikzset{candidate/.style = {draw,shape=rectangle,scale=1.1}}
	\tikzset{Proxy/.style = {circle, fill=red, minimum size=4pt, inner sep=0pt, outer sep=0pt}}
	\tikzmath{\proxyYoffset=0;}
		\draw[help lines, dashed, step = 0.5cm] (-5, -0.45) grid (4.5, 0.45);
		\node[candidate,label=above:{$c_1$}] at (-4,0){};
		\node[candidate,label=above:{$c_2$}] at (-1,0) {};
		\node[candidate,label=above:{$c_3$}] at (0.5,0) {};
		\node[candidate,label=above:{$c_4$}] at (3.5,0) {};
	    \draw[line width=1.2pt, red, dashed] (-2.5, -0.4) -- (-2.5, 0.4);
	    \draw[line width=1.2pt, red, dashed] (2, -0.4) -- (2, 0.4);
	\node[Proxy] at (-4,\proxyYoffset) {};
	\node[Proxy] at (-1,\proxyYoffset) {};
	\node[Proxy] at (0.5,\proxyYoffset) {};
	\node[Proxy] at (3.5,\proxyYoffset) {};
	    \draw[>=stealth', <->, shorten <=1pt, shorten >=1pt] (-4,-0.5) -- node[below] {$\theta+\eps$} ++ (3,0);
	    \draw[>=stealth', <->, shorten <=1pt, shorten >=1pt] (-1,-0.5) -- node[below] {$\theta-2\eps$} ++ (1.5,0);
	    \draw[>=stealth', <->, shorten <=1pt, shorten >=1pt] (0.5,-0.5) -- node[below] {$\theta+\eps$} ++ (3,0);
\end{tikzpicture}
\vspace{0.1in}
\end{subfigure}
\begin{subfigure}{\textwidth}
\centering
\begin{tikzpicture}[line width=0.8pt]
	\tikzset{candidate/.style = {draw,shape=rectangle,scale=1.1}}
	\tikzset{Proxy/.style = {circle, fill=red, minimum size=4pt, inner sep=0pt, outer sep=0pt}}
	\tikzmath{\proxyYoffset=0;}
		\draw[help lines, dashed, step = 0.5cm] (-5, -0.45) grid (4.5, 0.45);
		\node[candidate,label=above:{$c_1$}] at (-4,0){};
		\node[candidate,label=above:{$c_2$}] at (-1,0) {};
		\node[candidate,label=above:{$c_3$}] at (0.5,0) {};
		\node[candidate,label=above:{$c_4$}] at (3.5,0) {};
	    \draw[line width=1.2pt, red, dashed] (-2.5, -0.4) -- (-2.5, 0.4);
	    \draw[line width=1.2pt, red, dashed] (2, -0.4) -- (2, 0.4);
	\node[Proxy] at (-4.75,\proxyYoffset) {};
	\node[Proxy] at (-0.25,\proxyYoffset) {};
	\node[Proxy] at (4.25,\proxyYoffset) {};
	    \draw[>=stealth', <->, shorten <=1pt, shorten >=1pt] (-4,-0.5) -- node[below] {$\theta+\eps$} ++ (3,0);
	    \draw[>=stealth', <->, shorten <=1pt, shorten >=1pt] (-1,-0.5) -- node[below] {$\theta-2\eps$} ++ (1.5,0);
	    \draw[>=stealth', <->, shorten <=1pt, shorten >=1pt] (0.5,-0.5) -- node[below] {$\theta+\eps$} ++ (3,0);
\end{tikzpicture}
\end{subfigure}
\caption{Unrestricted positioning (bottom) requires strictly fewer proxies than restricted positioning (top). The proxies in each case are shown as solid red circles.}
\label{fig:Unrestricted_Beats_Restricted}
\end{figure}

Under restricted positioning, a $\theta$-representative proxy arrangement must place proxies on all candidates, using four proxies in total (see top figure in \Cref{fig:Unrestricted_Beats_Restricted}). By contrast, under unrestricted positioning, there exists a feasible proxy arrangement that only uses three proxies, namely $p_1 = -\frac{1}{6}+\eps$, $p_2 = 0.5$, and $p_3 = \frac{7}{6}+\eps$ (see bottom figure in \Cref{fig:Unrestricted_Beats_Restricted}).

Notice that the saving in the number of proxies in the unrestricted case was achieved by ``merging'' the second and the third proxies. This, in turn, forces the first and the fourth proxies to fall outside of $[0,1]$, thus highlighting the importance of allowing the proxies to be anywhere on the real line.
\label{eg:Unrestricted_Beats_Restricted}
\end{example}

\subsection{Algorithm for Computing Optimal Number of Unrestricted Proxies}

Let us now state our main result for unrestricted proxies.

\begin{restatable}[\textbf{Optimal proxy arrangement under unrestricted positioning}]{theorem}{UnrestrictedProxiesOptimal}
There is a polynomial-time algorithm that, given any instance $\langle \C, \theta \rangle$ of \ProxyVoting{} as input, terminates in polynomial time and returns an optimal $\theta$-representative proxy arrangement.
\label{thm:Unrestricted_Proxies_Optimal}
\end{restatable}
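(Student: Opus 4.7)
My plan is to extend the dynamic programming idea underlying \Cref{thm:Opt_Restricted} to the unrestricted setting. The two new complications are: (i)~a proxy's position is no longer pinned to a candidate but only required to lie in the Voronoi cell $V_f$ of its favorite $c_f$, and (ii)~consecutive proxies are coupled via the reflection relation $p_{i+1} = 2 b_i - p_i$, where $b_i$ is their common bisector. Because of (ii), one cannot simply decide on each proxy locally, and because of (i), one cannot reuse the restricted algorithm directly.

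The first step is a structural lemma showing that there is an optimal arrangement in which every proxy bisector coincides with a candidate bisector between adjacent candidates. Bisectors between $\theta$-far adjacent candidates are already forced to do so by \Cref{lem:Bisectors_Non_Coincident}. For any remaining bisector $b_i$ lying strictly inside some candidate Voronoi cell $V_j$, sliding $b_i$ to an endpoint of $V_j$ does not change which candidate Voronoi cells overlap with each proxy region, so $\theta$-representation is preserved; a compensating adjustment of proxy positions within their (interval-shaped) Voronoi cells restores proxy feasibility. This reduces the continuous choice of bisectors to a discrete selection among the $O(m)$ candidate bisectors.

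Given this reduction, the DP processes candidate bisectors from left to right. The state encodes the current candidate bisector, the favorite $c_f$ of the last proxy, and the current feasibility interval for a free parameter $p_1$; the value stored is the minimum number of proxies used. For any fixed sequence of chosen bisectors and favorites, the reflection relation expresses every $p_i$ as an affine function of $p_1$ with slope $\pm 1$, so each constraint $p_i \in V_{f_i}$ is an interval constraint on $p_1$, and the stored feasibility interval tracks their intersection. A transition picks the next candidate bisector as the next proxy bisector, picks a favorite $c_{f'}$ that is $\theta$-close to every candidate whose Voronoi cell intersects the new region (this is the purely combinatorial $\theta$-representation check), and shrinks the $p_1$-interval by intersecting with the new interval constraint. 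The algorithm reports the minimum proxy count over all DP paths that reach the right endpoint $1$ with a non-empty $p_1$-interval.

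The main obstacle is arguing that the DP has polynomially many distinct states despite the continuous $p_1$-interval. I expect this to follow from showing that each endpoint of the feasibility interval is determined up to sign and a shift by $2 \sum_j b_j$ (over chosen bisectors $b_j$) by a Voronoi boundary $x$; since bisectors and Voronoi boundaries are drawn from $O(m)$ rationals and what the DP cares about is only the endpoint value (not which reflections produced it), standard arguments collapse the state space to $\poly(m)$. Combined with the structural lemma, this yields polynomial-time transitions and a polynomial-time algorithm overall, with correctness following from the structural lemma together with the exhaustive nature of the DP.
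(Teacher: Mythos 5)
Your approach rests on a structural lemma---that some optimal arrangement has every proxy bisector at a candidate bisector between adjacent candidates---and the justification you give for it does not hold up. The ``slide the bisector to an endpoint of $V_j$ and compensate'' argument is local, but the constraints form a coupled chain: the bisector $b_i$ equals $(p_i+p_{i+1})/2$, so moving it forces a move of $p_i$ or $p_{i+1}$, which shifts the neighboring bisectors $b_{i-1}$ or $b_{i+1}$, and the perturbation propagates. Concretely, take $\theta=0.2$ and candidates at $0,\,0.19,\,0.2,\,0.21,\,0.4,\,0.59,\,0.6,\,0.61,\,0.8,\,1$. There is a feasible $3$-proxy arrangement whose proxies have favorites $0.2$, $0.6$, $0.8$; the first two proxies are then pinned to the tiny Voronoi cells $[0.195,0.205]$ and $[0.595,0.605]$, so their bisector is confined to $[0.395,0.405]$, a range strictly inside $V_5=[0.305,0.495]$ that contains no adjacent-candidate bisector. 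No compensating adjustment can fix this while keeping these favorites; the only way to rescue the lemma on this instance is to switch to a different assignment of favorites (e.g.\ $0,\,0.4,\,0.8$), which is a global re-derivation, not a perturbation. Your proposal gives no argument that such a rescue is always possible, and the paper neither proves nor needs any such discretization: it works with continuous bisector positions directly, propagating feasibility \emph{regions} by reflecting interval endpoints about the extreme admissible bisectors $\myoverbar{b}_h$ and $\myunderbar{b}_i$.

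The second gap is that, even granting the structural lemma, you explicitly defer the one genuinely hard step: bounding the number of DP states. After $r$ reflections the endpoints of your $p_1$-feasibility interval have the form $(-1)^r x + 2\sum_s(-1)^{r-s}b_{j_s}$, where the alternating sum depends on the entire history of chosen bisectors; over all DP paths these values are a priori exponential in number, and ``standard arguments'' do not collapse them. This is exactly where the paper's proof does its real work: it tracks, for each proxy index $j$ and candidate $c_i$, the union $F^{j,i}$ of all achievable positions, and proves via the Type~I/Type~II case analysis that this union has at most $2jm$ maximal convex pieces, because every Type~II (extremal) contribution within a candidate's extended Voronoi cell merges with the others, while each set is Type~I for at most one candidate. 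Without an argument of this kind---or some substitute for it---your algorithm is not established to run in polynomial time, so the proposal as written does not prove the theorem.
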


The proof of \Cref{thm:Unrestricted_Proxies_Optimal} is technically the most involved part of the paper and is presented in \Cref{subsubsec:Proof_Unrestricted_Proxies_Optimal} in the appendix. Here we will outline a brief sketch of the proof.

\begin{proofsketch}
For any fixed $k \in [m]$, our algorithm decides whether there exists a feasible (i.e., $\theta$-representative) arrangement of $k$ proxies for the given instance. The smallest $k$ with a positive answer is returned as the output.

Given a proxy arrangement $\P = \{p_1,\dots,p_k\}$ (where $p_1 < p_2 < \dots < p_k$), let us define the Voronoi cell $W_j$ of proxy $p_j$ as the set of all locations of voters whose closest proxy is $p_j$, i.e., $W_j \coloneqq \{v \in [0,1]: |v - p_j| \leq |v - p_\ell| \text{ for any } \ell \neq j\}$, where ties are broken according to a directionally-consistent tie-breaking rule (\Cref{defn:Tie-Breaking}). Notice that $v \in W_j$ if and only if $p_j = p^v$.

At a high level, the algorithm uses dynamic programming to compute the set of feasible locations (or the \emph{feasibility set}) of the proxy $p_{j+1}$ using the feasibility set of the preceding proxy $p_j$. It maintains the property that for each point $s$ in the feasibility set of proxy $p_{j+1}$, there exists some point $s' < s$ in the feasibility set of proxy $p_j$ such that $\theta$-representation is satisfied for all voters in $W_j$ (see \Cref{lem:Unrestricted_Proxies_Characterization} in the appendix for a formal characterization result). Such a pair of locations $\{s',s\}$ is said to be \emph{mutually feasible}.

To begin with, the feasibility set $F^1$ of the leftmost proxy $p_1$ is initialized as the union of the Voronoi cells of all candidates that are $\theta$-close to the leftmost candidate $c_1$, along with the region $(-\infty,0)$.\footnote{Recall from \Cref{eg:Unrestricted_Beats_Restricted} that proxies can lie outside $[0,1]$ in an optimal arrangement under unrestricted positioning.} This corresponds to the set of all positions of the leftmost proxy such that $\theta$-representation is satisfied for the leftmost voter.

The feasibility set $F^{j+1}$ of the proxy $p_{j+1}$ is computed using $F^j$ as follows: Consider the restriction of $F^j$ to the Voronoi cell of candidate $c_h$ (denoted by $F^{j,h}$), as illustrated by the interval $[x,y]$ in \Cref{fig:Optimal_Unrestricted_Reflection_ProofSketch} (thus, $\top(p_j) = c_h$). We will use $F^{j,h}$ to compute $F^{j+1,i}$, which is the restriction of $F^{j+1}$ to the Voronoi cell of candidate $c_i$.

Let $\myoverbar{b}_h$ denote the candidate bisector between the furthest $\theta$-close and closest $\theta$-far candidates to the right of $c_h$. Similarly, let $\myunderbar{b}_i$ denote the candidate bisector between the furthest $\theta$-close and closest $\theta$-far candidates to the left of $c_i$. 

By our characterization result, it follows that any point $z \in F^{j+1,i}$ that is mutually feasible with some point in $F^{j,h}$ must be (weakly) to the left of the mirror image of the point $x$ about the bisector $\myoverbar{b}_h$ (namely, $x'$). This is because the proxy bisector between $p_{j}$ and $p_{j+1}$ has to be (weakly) to the left of $\myoverbar{b}_h$. Similarly, $z$ must be (weakly) to the right of the mirror image of point $y$ about the bisector $\myunderbar{b}_i$ (namely, $y'$). The contribution of the set $[x,y]$ to $F^{j+1,i}$ is given by $V_i \cap [y',x']$. In general, the restriction $F^{j,h}$ could comprise of several disjoint intervals. In \Cref{subsubsec:Proof_Unrestricted_Proxies_Optimal} in the appendix, we describe how the feasibility set $F^{j+1}$ can nevertheless be efficiently computed.

\begin{figure}[h]
\centering
%
\begin{tikzpicture}[line width=0.8pt]
	\tikzset{candidate/.style = {draw,shape=rectangle,scale=1.1}}
	\tikzset{proxy/.style = {circle, draw, red, minimum size=3pt, inner sep=0pt, outer sep=0pt}}
	\tikzmath{\proxyYoffset=-0.3;\proxyXoffset=0.2;}
	    \fill [gray!20] (-7,-0.3) rectangle ++(2.5,0.6);
	    \fill [gray!20] (-1.5,-0.3) rectangle ++(4,0.6);
		\draw[help lines, dashed, step = 0.5cm] (-7.5, -0.45) grid (3, 0.45);
		\node[candidate,fill=blue!20,label=above:{$c_h$}] at (-5,0){};
		\node[candidate] at (-4,0) {};
		\node[candidate] at (-3,0) {};
		\node[candidate] at (-2,0) {};
		\node[candidate,fill=blue!20,label=above:{$c_i$}] at (-1,0) {};
		%
	    \draw[line width=1.2pt, dashed] (-3.5, -0.4) -- (-3.5, 0.4);
	    \node[label=above:{$\myunderbar{b}_i$}] at (-3.5,0.3) {};
	    \draw[line width=1.2pt, dashed] (-2.5, -0.4) -- (-2.5, 0.4);
	    \node[label=above:{$\myoverbar{b}_h$}] at (-2.5,0.3) {};
	    \node[circle,minimum size=1.5pt,inner sep=1pt,fill=red,label=below:{$x$}] (test) at (-7,0) {};
	    \node[circle,minimum size=1.5pt,inner sep=1pt,fill=red,label=below:{$y$}] (test) at (-6.5,0) {};
	    \fill [red] (-6.99,0.02) rectangle (-6.49,-0.02);
	    \node[circle,minimum size=1.5pt,inner sep=1pt,fill=red,label=below:{$y'$}] (test) at (-0.5,0) {};
	    \node[circle,minimum size=1.5pt,inner sep=1pt,fill=red,label=below:{$x'$}] (test) at (2,0) {};
	    \fill [red] (-0.49,0.02) rectangle (1.99,-0.02);
	    %
	    \dimline[extension start length=0,extension end length=0]{(-5,-0.5)}{(-3,-0.5)}{$\theta$};
	    %
	    \dimline[extension start length=0,extension end length=0]{(-3,-0.5)}{(-1,-0.5)}{$\theta$};
\end{tikzpicture}
\caption{}
\label{fig:Optimal_Unrestricted_Reflection_ProofSketch}
\end{figure}

Having computed the feasibility set $F^k$ of the rightmost proxy, the algorithm now checks whether it overlaps with the union of the Voronoi cells of candidates that are $\theta$-close to the rightmost candidate $c_m$ along with the region $(1,\infty)$. If yes, then there exists a feasible location of the proxy $p_k$ that is $\theta$-representative for the rightmost voter. By the aforementioned property, there must exist a feasible location $p_{k-1}$ of the previous proxy such that $p_{k-1}$ and $p_k$ are mutually feasible, thus implying $\theta$-representation for the voters in $W_{k-1}$. Similarly, there must exist a feasible location of the proxy $p_{k-2}$ that is mutually feasible with $p_{k-1}$, and so on. Continuing backwards in this manner, we obtain a set of proxy locations $p_1,\dots,p_k$ wherein the adjacent pairs are mutually feasible, which immediately implies $\theta$-representation. On the other hand, the absence of an overlap certifies that with $k$ proxies, there is no $\theta$-representative proxy arrangement in the given instance.
\end{proofsketch}

\subsection{Upper and Lower Bounds for Unrestricted Positioning of Proxies}

To compute an upper bound on the number of proxies, we will show that an algorithm similar to that for the restricted setting turns out to be useful. We will defer the detailed description of our algorithm and its formal analysis to \Cref{subsubsec:Proof_UpperBound_Unrestricted} in the appendix, 
and instead revisit the example from \Cref{fig:Example_UpperBound_Restricted} considered previously in the restricted case.

Our algorithm proceeds in two phases. The first phase is identical to that of the algorithm for the restricted case (\Cref{thm:UpperBound_Restricted}), and returns a set of proxy bisectors. In the second phase, the algorithm starts with a proxy arrangement that is consistent with the proxy bisectors computed in Phase 1 by placing a pair of equidistant proxies on either side of each bisector. This results in twice as many proxies as there are bisectors. To shrink this number down to $3/2$ times the number of bisectors, the algorithm utilizes the additional flexibility of the unrestricted setting via an ``expand and merge'' step. Specifically, the algorithm pulls all the proxies away from their bisectors at equal speeds until there is a ``collision'' event in some interval (recall that an interval is the area between adjacent proxy bisectors). At this point, the two proxies in that interval can be merged into a single proxy, and the locations of their `partner' proxies are frozen (see \Cref{fig:Example_UpperBound_Unrestricted}). This process is repeated until all proxies are frozen.

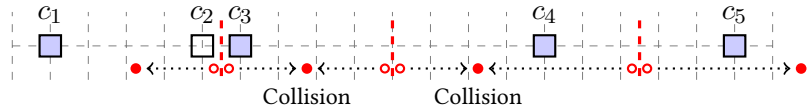
\begin{figure}[h]
\centering
\begin{tikzpicture}[line width=0.8pt]
	\tikzset{candidate/.style = {draw,shape=rectangle,scale=1.1}}
	\tikzset{Oldproxy/.style = {circle, draw,red, minimum size=3pt, inner sep=0pt, outer sep=0pt}}
	\tikzset{Newproxy/.style = {circle, fill=red, minimum size=4pt, inner sep=0pt, outer sep=0pt}}
	\tikzmath{\proxyYoffset=-0.3;\proxyXoffset=0.2;}
		\draw[help lines, dashed, step = 0.5cm] (-6, -0.45) grid (4, 0.45);
		\node[candidate,fill=blue!20,label=above:{$c_1$}] at (-5.5,0){};
		\node[candidate,label=above:{$c_2$}] at (-3.5,0) {};
		\node[candidate,fill=blue!20,label=above:{$c_3$}] at (-3,0) {};
		\node[candidate,fill=blue!20,label=above:{$c_4$}] at (1,0) {};
		\node[candidate,fill=blue!20,label=above:{$c_5$}] at (3.5,0) {};
	    \draw[line width=1.2pt, red, dashed] (-3.25, -0.4) -- (-3.25, 0.4);
	    \draw[line width=1.2pt, red, dashed] (-1, -0.4) -- (-1, 0.4);
	    \draw[line width=1.2pt, red, dashed] (2.25, -0.4) -- (2.25, 0.4);
		\node[Oldproxy] at (-3.25-0.5*\proxyXoffset,\proxyYoffset) {};
		\node[Oldproxy] at (-3.25+0.5*\proxyXoffset,\proxyYoffset) {};
		\node[Oldproxy] at (-1-0.5*\proxyXoffset,\proxyYoffset) {};
		\node[Oldproxy] at (-1+0.5*\proxyXoffset,\proxyYoffset) {};
		\node[Oldproxy] at (2.25-0.5*\proxyXoffset,\proxyYoffset) {};
		\node[Oldproxy] at (2.25+0.5*\proxyXoffset,\proxyYoffset) {};
		\node[Newproxy] at (-4.375,\proxyYoffset) {};
		\node[Newproxy] (a) at (-2.125,\proxyYoffset) {};
		\node[below=1pt of a] {\footnotesize{Collision}};
		\node[Newproxy] (a) at (0.125,\proxyYoffset) {};
		\node[below=1pt of a] {\footnotesize{Collision}};
		\node[Newproxy] at (4.375,\proxyYoffset) {};
		\draw[->, dotted, shorten <=4pt, shorten >=4pt] (-3.25-0.5*\proxyXoffset,\proxyYoffset) -- (-4.375,\proxyYoffset);
		\draw[->, dotted, shorten <=4pt, shorten >=4pt] (-3.25+0.5*\proxyXoffset,\proxyYoffset) -- (-2.125,\proxyYoffset);
		\draw[->, dotted, shorten <=4pt, shorten >=4pt] (-1-0.5*\proxyXoffset,\proxyYoffset) -- (-2.125,\proxyYoffset);
		\draw[->, dotted, shorten <=4pt, shorten >=4pt] (-1+0.5*\proxyXoffset,\proxyYoffset) -- (0.125,\proxyYoffset);
		\draw[->, dotted, shorten <=4pt, shorten >=4pt] (2.25-0.5*\proxyXoffset,\proxyYoffset) -- (0.125,\proxyYoffset);
		\draw[->, dotted, shorten <=4pt, shorten >=4pt] (2.25+0.5*\proxyXoffset,\proxyYoffset) -- (4.375,\proxyYoffset);
\end{tikzpicture}
%
\caption{Illustrating the execution of the algorithm in \Cref{thm:UpperBound_Unrestricted} on a toy example. The initial and the final proxy locations are shown as empty and solid red circles, respectively.}
\label{fig:Example_UpperBound_Unrestricted}
\end{figure}

By construction, all candidates within each interval are $\theta$-close. Furthermore, since each voter is in the same interval as its closest proxy, $\theta$-representation is satisfied for all voters. By an analysis similar to that in \Cref{thm:UpperBound_Restricted}, it follows that the number of proxy bisectors is at most $\lceil\frac{1}{\theta}\rceil$. The expand-and-merge step ensures that no two consecutive intervals can have two proxies each. This readily implies the desired bound of $\frac{3}{2}\lceil\frac{1}{\theta}\rceil$ for the number of proxies.

\begin{restatable}[\textbf{Upper bound under unrestricted positioning}]{theorem}{UpperBoundUnrestricted}
Given any instance $\langle \C, \theta \rangle$ of \ProxyVoting{} as input, there exists a $\theta$-representative arrangement consisting of at most $\frac{3}{2}\lceil\frac{1}{\theta}\rceil$ proxies. Furthermore, such an arrangement can be computed in polynomial time.
\label{thm:UpperBound_Unrestricted}
\end{restatable}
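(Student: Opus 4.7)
The plan is to prove Theorem~\ref{thm:UpperBound_Unrestricted} by a two-phase construction: first reuse the bisector-selection sweep from Theorem~\ref{thm:UpperBound_Restricted} to pin down at most $\lceil 1/\theta\rceil$ proxy bisectors, then exploit the unrestricted positioning through a continuous expand-and-merge process to cut the initial $2B$ candidate proxies down to at most $\frac{3}{2}\lceil 1/\theta\rceil$.

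For Phase 1 I would execute exactly the greedy scan used in the restricted proof: starting with $c_1$ as the reference, repeatedly pick the bisector between the farthest $\theta$-close and the closest $\theta$-far candidates to the right of the current reference, then update the reference to the candidate immediately to the right of this new bisector. A counting argument identical to the one in Theorem~\ref{thm:UpperBound_Restricted} (each bisector corresponds to crossing more than $\theta$ along $[0,1]$) shows that the number of bisectors $B$ satisfies $B \leq \lceil 1/\theta\rceil$, and by construction all candidates within any single interval $I_j=(b_j,b_{j+1})$ are pairwise $\theta$-close.

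Phase 2 first initializes proxies consistent with these bisectors by placing, for each $b_i$, a symmetric pair $p_i^- = b_i - \delta$ and $p_i^+ = b_i + \delta$ with $\delta > 0$ infinitesimally small, giving $2B$ proxies whose pairwise bisectors coincide exactly with the $b_i$. Since every voter's nearest proxy then lies in the voter's own interval, and all candidates inside that interval are $\theta$-close, $\theta$-representation holds. I would then continuously increase $\delta$ while keeping each pair symmetric about its bisector (so the bisector locations never move), and process collision events in time order: when $p_j^+$ meets $p_{j+1}^-$ inside $I_j$, merge them into a single proxy at the meeting point and freeze the partners $p_j^-$ (in $I_{j-1}$) and $p_{j+1}^+$ (in $I_{j+1}$). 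The process terminates when no further collisions among still-moving proxies are possible.

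The main obstacle is the counting argument bounding the final proxy count by $\frac{3}{2}\lceil 1/\theta\rceil$. The key claim is that after the process terminates, no two consecutive intervals can each retain two unmerged proxies: if $I_j$ and $I_{j+1}$ both retained two, then the proxies flanking $b_{j+1}$ would have remained unfrozen and $\delta$ could still grow, forcing a collision in one of the two intervals and contradicting termination. Combined with the fact that the two end intervals $I_0$ and $I_B$ each contain only one proxy from the outset, this alternation condition suffices to bound the total by $\frac{3}{2}\lceil 1/\theta\rceil$. To confirm $\theta$-representation of the final arrangement, I would verify that every proxy stays inside its original interval throughout the expansion (a proxy cannot cross a bisector, since a collision in that interval would occur first), so each voter's nearest proxy still shares the voter's interval and hence a $\theta$-close favorite candidate. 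Polynomial running time follows because Phase 1 is a single linear scan and Phase 2 processes at most $O(B)$ collision events, each computable in polynomial time.
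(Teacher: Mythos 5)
Your proposal matches the paper's proof essentially step for step: the same Phase-1 greedy bisector sweep, the same symmetric expand-and-merge in Phase 2 (the paper merely discretizes your continuous expansion by stepping $\Delta$ through the order statistics of the interval widths so that all collision times stay rational), and the same key counting claim that no two consecutive intervals can both retain two proxies. The only quibble is that your stated intermediate bound $B \le \lceil\frac{1}{\theta}\rceil$ on the number of bisectors is one too weak for the final arithmetic to close (an alternating worst case would then give $\frac{3}{2}\lceil\frac{1}{\theta}\rceil+1$ proxies); the sweep argument you invoke in fact yields $B < \frac{1}{\theta}$, hence $B \le \lceil\frac{1}{\theta}\rceil - 1$, which is the bound the paper uses and which makes your counting go through.
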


Notice that \Cref{thm:UpperBound_Unrestricted} can be used to obtain upper bounds for the dual problem to \ProxyVoting{} wherein the input consists of a proxy budget $k$ and the goal is to provide an upper bound on $\theta$. Indeed, given as input a budget of $k > 1$ proxies, we can invoke our algorithm with $\theta = 1/\lfloor\frac{2k}{3}\rfloor$. Then, by \Cref{thm:UpperBound_Unrestricted}, we know that the proxy arrangement returned by the algorithm is $\left(1/\lfloor\frac{2k}{3}\rfloor\right)$-representative and uses at most $\frac{3}{2}\lceil\frac{1}{\theta}\rceil \leq k$ proxies, which is within the given budget. \Cref{cor:UpperBound_Unrestricted_Theta} formalizes this observation.

\begin{restatable}[\textbf{Upper bound on $\theta$}]{corollary}{UpperBoundUnrestrictedTheta}
There is a polynomial-time algorithm that, given any positive integer $k \in \mathbb{N}$ as input, returns an arrangement of at most $k$ proxies that is $\left(1/\lfloor\frac{2k}{3}\rfloor\right)$-representative.
\label{cor:UpperBound_Unrestricted_Theta}
\end{restatable}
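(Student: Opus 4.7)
The plan is to deduce the corollary directly from \Cref{thm:UpperBound_Unrestricted} by a well-chosen reparametrization. Given a budget $k \in \mathbb{N}$, I would first dispose of the trivial case $k = 1$ (where $\lfloor 2k/3 \rfloor = 0$ makes $\theta$ ill-defined, and a single proxy placed anywhere in $[0,1]$ trivially achieves the degenerate $\theta = 1$ bound as noted in the footnote to \Cref{defn:Proxy_Voting}). For $k \geq 2$, set $\theta \coloneqq 1/\lfloor \tfrac{2k}{3} \rfloor \in (0,1]$ and run the algorithm of \Cref{thm:UpperBound_Unrestricted} on the instance $\langle \C, \theta \rangle$. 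By that theorem, this produces, in polynomial time, a $\theta$-representative arrangement of at most $\tfrac{3}{2}\lceil 1/\theta \rceil$ proxies.

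The remaining task is a counting check: I must verify that $\tfrac{3}{2}\lceil 1/\theta \rceil \leq k$, so that the output stays within the given budget. Since $1/\theta = \lfloor \tfrac{2k}{3} \rfloor$ is itself an integer, $\lceil 1/\theta \rceil = \lfloor \tfrac{2k}{3} \rfloor$, and the number of proxies is at most $\tfrac{3}{2}\lfloor \tfrac{2k}{3} \rfloor$. A short case analysis on $k \bmod 3$ finishes this: writing $k = 3m$ gives bound $3m = k$; writing $k = 3m+1$ gives $\lfloor 2k/3 \rfloor = 2m$ and bound $3m < k$; and writing $k = 3m+2$ gives $\lfloor 2k/3 \rfloor = 2m+1$ and bound $3m + \tfrac{3}{2}$, which, being an upper bound on an integer proxy count, floors to $3m+1 < k$. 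In all three cases the bound is at most $k$, as required.

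Finally, since $\theta$ has the same rationality properties needed to apply \Cref{thm:UpperBound_Unrestricted}, and since the reparametrization itself takes $\O(1)$ time, the overall procedure remains polynomial in the input size (which includes $k$ in unary, or equivalently the encoding of $\theta$). There is no real technical obstacle here: the construction is a black-box invocation, and the only subtlety is the arithmetic verification above, together with remembering to treat $k=1$ separately so that the floor in the denominator does not vanish.
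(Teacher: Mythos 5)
Your proposal is correct and matches the paper's own argument: the paper likewise invokes the algorithm of \Cref{thm:UpperBound_Unrestricted} with $\theta = 1/\lfloor\frac{2k}{3}\rfloor$ for $k>1$ and checks that $\frac{3}{2}\lceil\frac{1}{\theta}\rceil \leq k$. Your explicit case analysis on $k \bmod 3$ and the separate treatment of the degenerate small-$k$ case are just slightly more detailed versions of the same reasoning.
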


Next, we will show that the dependence on $\theta$ in the upper bound cannot be improved.

\begin{restatable}[\textbf{Lower bound under unrestricted positioning}]{prop}{LowerBoundForK}
Given any $\theta \in (0,1)$, there exists an instance such that any $\theta$-representative proxy arrangement requires at least $\lceil \frac{1}{\theta} \rceil$ proxies.
\label{prop:LowerBound_Unrestricted}
\end{restatable}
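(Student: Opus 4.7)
The plan is to construct a simple lower-bound instance with $k := \lceil 1/\theta \rceil$ evenly spaced candidates, and then invoke the contrapositive of \Cref{lem:Bisectors_Non_Coincident} to force the $k$-proxy lower bound. Concretely, I would take the instance with $m = k$ candidates located at $c_i := (i-1)/(k-1)$ for $i \in [k]$, so that $c_1 = 0$, $c_k = 1$, all positions are rational, and every pair of adjacent candidates is at distance exactly $1/(k-1)$.

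The first substep is to verify that $1/(k-1) > \theta$, so that each of the $k-1$ pairs of adjacent candidates is $\theta$-far. This splits into two cases: if $1/\theta \in \mathbb{N}$ then $\theta = 1/k < 1/(k-1)$; otherwise $\lceil 1/\theta \rceil = k$ gives $1/\theta < k$, hence $\theta < 1/(k-1)$. Note also that $k \geq 2$ since $\theta \in (0,1)$, so $k-1 \geq 1$ and the construction is nondegenerate.

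Having established that all $k-1$ adjacent candidate pairs are $\theta$-far, the second substep applies the contrapositive of \Cref{lem:Bisectors_Non_Coincident}: any $\theta$-representative proxy arrangement $\P$ must contain, for \emph{each} of the $k-1$ candidate bisectors $(c_i+c_{i+1})/2$, a bisector between some pair of adjacent proxies in $\P$ that coincides with it. Since these $k-1$ candidate bisectors are pairwise distinct, $\P$ must possess at least $k-1$ distinct bisectors between adjacent proxies, and a set of $p$ proxies gives only $p-1$ such bisectors. Therefore $|\P| \geq k = \lceil 1/\theta \rceil$, as required.

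There is no genuinely hard step here: the only thing to be careful about is correctly handling the strict inequality $1/(k-1) > \theta$ in both the integer and non-integer subcases of $1/\theta$, since \Cref{lem:Bisectors_Non_Coincident} requires the adjacent candidates to be \emph{strictly} more than $\theta$ apart. Everything else is a direct consequence of that lemma and the simple counting argument relating the number of proxies to the number of bisectors between adjacent proxies.
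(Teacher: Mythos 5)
Your proof is correct, and your instance is in fact identical to the paper's: the paper places $m = \lceil\frac{1}{\theta}\rceil$ candidates evenly spaced at distance $\theta' > \theta$, which for both the integer and non-integer subcases of $\frac{1}{\theta}$ works out to exactly your spacing of $\frac{1}{k-1}$ with $k = \lceil\frac{1}{\theta}\rceil$. Where you differ is in the final counting step. The paper argues directly that every candidate must be the favorite of some proxy (otherwise a voter located \emph{at} an unrepresented candidate $c$ would have its nearest proxy's favorite at distance at least $\theta' > \theta$ from $c$), so at least $m$ proxies are needed. You instead invoke the contrapositive of \Cref{lem:Bisectors_Non_Coincident}: each of the $k-1$ distinct candidate bisectors must coincide with a distinct bisector between adjacent proxies, and $p$ proxies supply only $p-1$ such bisectors, forcing $p \geq k$. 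Both arguments are equally short and valid; yours has the modest advantage of reusing a lemma already established in the paper rather than introducing a fresh (if easy) argument, while the paper's version is self-contained and makes explicit \emph{which} voter witnesses the failure. Your case analysis establishing the strict inequality $\frac{1}{k-1} > \theta$ is handled correctly in both subcases, which is indeed the only point where care is required.
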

\begin{proof}
%
Define $\theta' \in (0,1)$ as follows:
\[
    \theta' \coloneqq \begin{cases}
        \frac{1}{p-1}, & \text{if } \theta = \frac{1}{p} \text{ for some positive integer } p\\
        \frac{1}{p}, & \text{if } \frac{1}{p+1} < \theta < \frac{1}{p} \text{ for some positive integer } p.
        \end{cases}
\]

Observe that $\theta' > \theta$. Consider an instance where the candidates are evenly spaced at a distance of $\theta'$, as shown in \Cref{fig:LowerBound_Unrestricted}. Notice that the total number of candidates is $m = \frac{1}{\theta'} + 1 = \lceil \frac{1}{\theta} \rceil$.

\begin{figure}[h]
\centering
\begin{tikzpicture}[line width=0.8pt]
	\tikzset{candidate/.style = {draw,shape=rectangle,scale=1.1}}
	\tikzset{voter/.style = {cross out, draw=red, minimum size=4pt, inner sep=0pt, outer sep=0pt},voter/.default = {2pt}}
	\tikzset{
		pics/voter/.style args={#1,#2,#3}{
			code={
			\node[cross out, draw=red, minimum size=4pt, inner sep=0pt, outer sep=0pt] (test1) at (#1,0) {};
			\node[] (test2) at (#1,-1) {#2};
			\node[] (test3) at (#1,-1.35) {#3};
			\draw[->,shorten >=4pt] (test2) to (test1);
     		}
  		}
	}
		\draw[thick, dashed] (-2.25, -0.4) -- (-2.25, 0.4);
		\draw[thick, dashed] (-0.75, -0.4) -- (-0.75, 0.4);
		\draw[thick, dashed] (2.25, -0.4) -- (2.25, 0.4);
		\node[candidate,label=above:{$c_1$}] (1) at (-3,0){};
		\node[candidate,label=above:{$c_2$}] (2) at (-1.5,0) {};
		\node[candidate,label=above:{$c_3$}] (3) at (0,0) {};
		\node[candidate,label=above:{$c_{m-1}$}] (4) at (1.5,0) {};
		\node[candidate,label=above:{$c_m$}] (5) at (3,0) {};
	    \path (3) -- node[auto=false]{\ldots} (4);
	    \dimline[extension start length=0,extension end length=0]{(-3,-0.5)}{(-1.5,-0.5)}{$\theta'$};
	    \dimline[extension start length=0,extension end length=0]{(-1.5,-0.5)}{(0,-0.5)}{$\theta'$};
	    \dimline[extension start length=0,extension end length=0]{(0,-0.5)}{(1.5,-0.5)}{$\theta'$};
\end{tikzpicture}
\caption{Lower bound on the number of proxies under unrestricted positioning (\Cref{prop:LowerBound_Unrestricted})}
\label{fig:LowerBound_Unrestricted}
\end{figure}

Since any pair of adjacent candidates are $\theta$-far, it must be that each candidate is the favorite candidate of some proxy. Indeed, in any arrangement with fewer than $m$ proxies, there must exist a candidate, say $c$, that is not the favorite candidate of any proxy. Consider a voter $v$ that is located at $c$. The distance between the favorite candidate of $v$ and that of $v$'s nearest proxy must then be strictly greater than $\theta$, which violates $\theta$-representation. Thus, $m$ proxies are necessary, which gives the desired bound.
\end{proof}

\section{Representative Election Outcomes}

So far, we have focused on achieving $\theta$-representation for each individual voter (i.e., in the ``input'' space). A natural question is whether a similar guarantee can be achieved in the ``outcome'' space as well. Specifically, when is it that the outcome under direct voting (where the preference profile consists of voter's preferences over the candidates) is close to the outcome under proxy voting (where each voter's preference is substituted by that of its closest proxy). This is formally defined as follows.

\begin{definition}
Let $P \in [0,1]^n$ denote a preference profile consisting of the votes of the $n$ voters, and let $Q \coloneqq \{p^v:v\in P \}$ be the preference profile derived from $P$ by replacing the vote of each voter $v$ by that of its closest proxy $p^v$ (note that $Q$ can be a multiset). Then, an arrangement of proxies is said to be \emph{$\theta$-representative under a voting rule $r: \mathbb{R}^n \rightarrow \C$} if, for any preference profile $P \in [0,1]^n$, the candidate $r(P)$ is $\theta$-close to the candidate $r(Q)$. Here, $r(P)$ and $r(Q)$ are the outcomes under direct and proxy voting, respectively.
\label{defn:Theta_rule}
\end{definition}

It turns out that $\theta$-representation of a proxy arrangement implies $\theta$-representation under any strict-Condorcet rule with {\em consistent} tie-breaking (i.e., any strict-Condorcet rule which must choose a weak Condorcet winner as the winner, and in case there is more than one weak Condorcet winner, the leftmost (or, the rightmost) one must be chosen). 
\begin{restatable}{prop}{Condorcet}
\label{prop:rep-voting}
Let $r$ be a strict-Condorcet rule with consistent tie-breaking. If an arrangement of proxies is $\theta$-representative, then it is $\theta$-representative under the rule $r$.
\end{restatable}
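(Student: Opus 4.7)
The plan is to reduce Proposition~\ref{prop:rep-voting} to a median-voter characterization of strict-Condorcet rules on the line, combined with the monotonicity of the closest-proxy map induced by the Voronoi structure.

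First I would fix the leftmost tie-breaking convention (the rightmost case being symmetric), sort the voters of $P$ as $v_1 \le v_2 \le \cdots \le v_n$, and set $k \coloneqq \lfloor (n+1)/2 \rfloor$. My goal in this step is to establish the identity $r(P) = \top(v_k)$. To show $\top(v_k)$ is a weak Condorcet winner, I would argue that in any pairwise contest against a candidate $c' > \top(v_k)$, the midpoint of $\top(v_k)$ and $c'$ lies weakly to the right of $v_k$ (a direct consequence of $\top(v_k)$ being the favorite of $v_k$), so every voter $v_i$ with $i \le k$ lies weakly to the left of this midpoint and therefore prefers $\top(v_k)$; this yields at least $k \ge n/2$ votes, and the symmetric argument handles $c' < \top(v_k)$. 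To show that no $c < \top(v_k)$ is a weak Condorcet winner, I would use the mirror argument: voters $v_k, \ldots, v_n$ all prefer $\top(v_k)$ to $c$, and this count $n-k+1$ is strictly greater than $n/2$, so $\top(v_k)$ strictly beats $c$. Together these imply $\top(v_k)$ is the leftmost weak Condorcet winner, so $r(P) = \top(v_k)$.

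Next I would apply the same characterization to the proxy-derived profile $Q$. The crucial observation is that on the line each proxy's Voronoi cell is an interval, so the voter-to-proxy map is monotone: $v_i \le v_j$ implies $p^{v_i} \le p^{v_j}$. Consequently, when the positions appearing in $Q$ are sorted, the element of rank $k$ is precisely $p^{v_k}$, and the characterization from the previous paragraph applied to $Q$ yields $r(Q) = \top(p^{v_k})$.

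Finally, invoking Definition~\ref{defn:Theta_representative} at the voter located at $v_k$ gives
\[
|r(P) - r(Q)| \;=\; |\top(v_k) - \top(p^{v_k})| \;\le\; \theta,
\]
which is exactly the desired conclusion. The main obstacle will be the tie-breaking bookkeeping inside the median-voter characterization: I will need to verify that the directional rule which makes $\top(v_k)$ the favorite of $v_k$ among \emph{all} candidates is precisely the direction that forces $v_k$ to favor $\top(v_k)$ over $c'$ in the (rare) pairwise contests in which the two happen to be equidistant from $v_k$, so that the leftmost weak Condorcet winner coincides with $\top(v_k)$ on the nose rather than being some candidate further to the left. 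Once this is handled, the monotonicity step and the final application of $\theta$-representation are essentially immediate.
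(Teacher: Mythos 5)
Your proposal is correct and follows essentially the same route as the paper's proof: identify $r(P)$ with the top choice of the (leftmost) median voter, use monotonicity of the voter-to-proxy map $v \mapsto p^v$ to conclude that $r(Q)$ is the top choice of that median voter's proxy, and finish by applying $\theta$-representation at the median voter. The only cosmetic differences are that you unify the odd and even cases via the index $k = \lfloor (n+1)/2 \rfloor$ and re-derive the median-voter characterization from pairwise midpoint arguments, whereas the paper treats the two parities separately and cites the median voter theorem; the tie-breaking bookkeeping you flag is real but is handled no more carefully in the paper's own argument.
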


In light of Proposition~\ref{prop:rep-voting}, all $\theta$-representation results proved in this paper (see Table~\ref{tab:Results}) naturally extend to $\theta$-representation under strict-Condorcet rules. The proof of \Cref{prop:rep-voting} is presented in \Cref{sec:Proof_rep-voting} in the appendix.

\section{Concluding Remarks}

In this paper, we gave efficient algorithms for computing optimal sets of $\theta$-representative proxies, as well as proved upper and lower bounds on the number of proxies needed to achieve this property. Unlike in most related work, our proxy arrangements do not depend on the voter locations, and will remain representative for any set of voters. In fact, all our results hold (although some of the proofs become far more complex) even with additional requirements on the proxy arrangement, such as requiring that for every voter $v\in[0,1]$, its closest proxy is within a distance $\theta$, in addition to them being $\theta$-representative.

Many interesting open problems remain, however, beginning with closing the gap between the upper and lower bounds for the number of proxies under unrestricted positioning, and analysing the ability to form $\theta$-representative proxies in more general metric spaces. More generally, it would be interesting to expand the scope of $\theta$-representation from ``representing the top candidate well'' to ``representing the top-$k$ candidates well'', or to more general fairness properties.

\section*{Acknowledgments}
EA acknowledges support from NSF awards CCF-1527497 and CCF-2006286. RV acknowledges support from ONR\#N00014-171-2621 while he was affiliated with Rensselaer Polytechnic Institute, and is currently supported by project no. RTI4001 of the Department of Atomic Energy, Government of India. Part of this work was done while RV was supported by the Prof. R Narasimhan postdoctoral award. Research done in part while ZF was on research leave at Rensselaer Polytechnic Institute. We thank the anonymous reviewers for their very helpful comments and suggestions.

\bibliographystyle{named}
\bibliography{References}

\clearpage
\begin{center}
	\Large{Appendix}
\end{center}

\section{Proof of Theorem~\ref{thm:Opt_Restricted}}
\label{subsubsec:Proof_Opt_Restricted}

\RestrictedProxiesOptimal*
\begin{proof}
In order to compute the optimal proxy arrangement, it suffices to efficiently determine whether, for any fixed $k \in [m]$, there exists an arrangement of $k$ proxies satisfying the desired properties (and if the answer is YES, then return such an arrangement).

Our algorithm relies on the following structural observation: Consider a proxy arrangement $\P$ consisting of $k$ proxies $p_1,\dots,p_k$ such that the rightmost proxy $p_k$ is placed at the candidate location $c_j$. Then, $\P$ satisfies the desired properties (i.e., $\theta$-representation and restricted positioning) if and only if the following conditions hold:
\begin{enumerate}[1.]
	\item $\P$ satisfies the desired properties (i.e., $\theta$-representation and restricted positioning) with respect to the subinstance $\langle\{c_1,c_2,\dots,c_j\},\theta\rangle$, and 
	\item $c_j \in \C_\theta$, where $\C_\theta \coloneqq \{c_i \in \C : |c_i - c_m| \leq \theta\}$ is the set of candidates that are $\theta$-close to the rightmost candidate $c_m$.
\end{enumerate}
Condition 1 specifies the feasibility requirement for $\P$ with respect to the subinstance on the left side of the proxy $p_k$, while Condition 2 encodes a similar restriction for the right side. Indeed, for an extreme voter located at $v = c_m$, its favorite candidate is $\top(v) = c_m$, and the favorite candidate of its closest proxy is $\top(p^v)=c_j$. Therefore, $\theta$-representation for voter $v$ is equivalent to Condition 2. Furthermore, since there are no proxies between $c_j$ and $c_m$, the aforementioned conditions also imply $\theta$-representation for all intermediate voters $v \in [c_j,c_m]$. 

It is easy to see that Condition 2 can be efficiently checked (recall that all candidate locations and the parameter $\theta$ are rational). Therefore, it suffices to establish that Condition 1 can also be decided in polynomial time. To this end, we will provide an algorithm for determining, for every $j \in [m]$ and every $k \in [m]$, whether there exists a feasible proxy arrangement that uses \emph{exactly} $k$ proxies and under which the rightmost proxy coincides with the candidate $c_j$. 

We will present a dynamic programming algorithm to solve this problem. The algorithm computes a binary table $T$ with $m$ rows and $m$ columns, where the entry $T(j,k)$ equals $1$ if there exists a $\theta$-representative proxy arrangement for the subinstance $\langle\{c_1,c_2,\dots,c_j\},\theta\rangle$ that uses exactly $k$ proxies, say $p_1,\dots,p_k$, such that (a) the proxy locations are a subset of the candidate locations (i.e., $\{p_1,\dots,p_k\} \subseteq \{c_1,c_2,\dots,c_j\}$), and (b) the rightmost proxy $p_k$ is placed on the rightmost candidate $c_j$ for the subinstance.

To compute the table $T$, the algorithm starts by setting the diagonal entries $T(j,j) = 1$ for all $j \in [m]$ since $j$ proxies suffice for a subinstance with $j$ candidates. In addition, the upper-triangular entries are set to $0$ (i.e, $T(j,k) = 0$ whenever $k > j$) due to insufficient candidate locations for placing $k$ distinct proxies. When $k < j$, we set $T(j,k) = 1$ if the following two conditions hold:
\begin{enumerate}[(a)]
	\item there exists $i < j$ such that $T(i,k-1) = 1$, and
	\item for all $v \in [c_i,c_j]$, $|\top(v) - \top(p^v)| \leq \theta$;
\end{enumerate}
otherwise we set $T(j,k)= 0$.

Condition (a) ensures that there is a subinstance $\langle c_1,\dots,c_i \rangle$ that admits a feasible proxy arrangement with $k-1$ proxies such that the proxy $p_{k-1}$ is placed at $c_i$,
 while condition (b) ensures $\theta$-representation for the voters that lie between $c_i$ and $c_j$. 

It is easy to see that condition (a) can be efficiently checked. To decide condition (b), the algorithm checks whether the proxy bisector between $p_{k-1}$ and $p_k$ coincides with a bisector between some pair of adjacent candidates.

\begin{enumerate}[(i)]
	\item If yes, then let $c_\ell$ and $c_{\ell+1}$ denote the adjacent candidates such that $\frac{c_\ell+c_{\ell+1}}{2} = \frac{p_k+p_{k-1}}{2}$. Then, the algorithm returns YES if $|c_\ell - c_i| \leq \theta$ and $|c_{\ell+1} - c_j| \leq \theta$, otherwise it returns NO.

	\item Otherwise, the algorithm checks whether the favorite candidate of the voters located on either side of the proxy bisector is $\theta$-close to both $c_i$ and $c_j$. Specifically, let $\delta$ denote the smallest positive gap between any two candidate bisectors between not-necessarily-adjacent candidates (notice that such a $\delta > 0$ must exist), and let $\eps = \delta/3$. Then, the algorithm considers the voters $v^L \coloneqq \frac{p_{k-1} + p_k}{2} - \eps$ and $v^R \coloneqq \frac{p_{k-1} + p_k}{2} + \eps$, and checks whether $|\top(v^L) - \top(p^{v^L})| \leq \theta$ and $|\top(v^R) - \top(p^{v^R})| \leq \theta$, where $p^{v^L}$ and $p^{v^R}$ denote the proxies closest to $v^L$ and $v^R$, respectively. If both of these conditions are true, then the algorithm decides YES for condition (b), otherwise it decides NO.
\end{enumerate}

The reasoning behind the step (i) is the following: When $p_{k-1}=c_i$ and $p_k=c_j$, the proxy bisector between $p_{k-1}$ and $p_k$ coincides with the candidate bisector between $c_i$ and $c_j$ (note that $c_i$ and $c_j$ may not be adjacent). Then, for any voter $v \in [c_i,c_j]$, its closest proxy is either $p_{k-1}$ or $p_k$ (i.e., $p^v \in \{p_{k-1},p_k\}$), and therefore $\top(p^v) \in \{c_i,c_j\}$. Furthermore, under a directionally-consistent tie-breaking rule, voter $v$'s favorite candidate $\top(v)$ must be on the same side of the proxy bisector as $\top(p^v)$. Thus, in order to check for $\theta$-representation, we only need to check whether the candidates closest to the bisector (equivalently, the candidates that are furthest away from $c_i$ and $c_j$) are $\theta$-close to either $c_i$ or $c_j$.

The reasoning behind step (ii) is as follows: Since the proxy bisector between $p_{k-1}$ and $p_k$ does not coincide with any candidate bisector between adjacent candidates, there must exist a candidate, say $c_\ell \in \C$, whose Voronoi cell $V_\ell$ includes the proxy bisector in its relative interior. Then, for any sufficiently small $\eps > 0$ (for example, as chosen above), there must exist voters that are $\eps$-far from the proxy bisector on either side of it whose favorite candidate is $c_\ell$, but the favorite candidates of their respective closest proxies are $c_i$ and $c_j$. Thus, it suffices to check the $\theta$-representation condition for these voters only. For any other voter $v \in [c_i,c_j] \setminus V_\ell$, its favorite candidate and the favorite candidate of its closest proxy are on the same side of the bisector between $c_i$ and $c_j$, and are therefore $\theta$-close.

Overall, we have that condition (b) can also be efficiently verified. This implies that each lower-triangular entry $T(j,k)$ for $k < j$ can be efficiently computed, implying that the algorithm runs in polynomial time.

Once the table $T$ has been computed, the optimal number of proxies for the given instance $\I$ is the smallest $k$ for which $T(j,k)$ equals $1$ for some candidate $c_j \in \C_\theta$, i.e., $\opt(\I) = \min \{k \in [m] : T(j,k) = 1 \text{ for some } c_j \in \C_\theta\}$. To construct the optimal proxy arrangement $p_1,\dots,p_k$, the algorithm computes the positions of the proxies in the reverse order, i.e., first $p_k$, then $p_{k-1}$, and so on. The strategy is as follows: Pick any $j \in [m]$ such that $T(j,\opt(\I)) = 1$, and place the proxy $p_k$ at $c_j$. Next, pick any $i < j$ such that $T(i,\opt(\I)-1) = 1$; we know that such $i$ exists from condition (a) above. The proxy $p_{k-1}$ is then placed at $c_i$, and so on. Note that the proxy arrangement thus constructed satisfies restricted positioning.
%
%
%
\end{proof}

\section{Proof of Theorem~\ref{thm:UpperBound_Restricted}}
\label{subsubsec:Proof_UpperBound_Restricted}

\UpperBoundRestricted*
\begin{proof}
We will show that Algorithm~\ref{alg:UpperBound_Restricted} computes the desired proxy arrangement.

\begin{algorithm}[h]
\DontPrintSemicolon
\footnotesize
 \linespread{1.1}
\KwIn{A \ProxyVoting{} instance $\I = \langle \C, \theta \rangle$.}
\KwOut{A $\theta$-representative arrangement of proxies.}
\BlankLine
Initialize $c \leftarrow c_1$\Comment*{Start with $c_1$ as the reference}
Initialize $j \leftarrow 1$\Comment*{$j$ is the index of proxy bisector}
\While{$c + \theta < 1$\label{algline:UpperBound_Restricted_Location_Check}}{
\tcp{$c^L$ and $c^R$ are the candidates closest to $(c+\theta)$ on left and right, respectively}
$c^L \leftarrow \arg\min_{c_i \in \C \, : \, c_i \leq c+\theta} |c_i - (c+\theta)|$\label{algline:UpperBound_Restricted_Left_Candidate}\;
$c^R \leftarrow \arg\min_{c_i \in \C \, : \, c_i > c+\theta} |c_i - (c+\theta)|$\label{algline:UpperBound_Restricted_Right_Candidate}\;
\tcp{Place proxies on $c^L$ and $c^R$ to create the $j^\text{th}$ proxy bisector at $b_j \coloneqq (c^L+c^R)/2$}
$p_j^L \leftarrow c^L$ and $p_i^R \leftarrow c^R$\;
$c \leftarrow c^R$\Comment*{Repeat with $c^R$ as the new reference}
$j \leftarrow j+1$
}
\KwRet{$\P = \{p_j^L,p_j^R\}_{j \in [t]}$}\Comment*{$t$ is the number of proxy bisectors computed by the algorithm}
\caption{Algorithm for computing upper bound on no. of proxies under restricted positioning}
\label{alg:UpperBound_Restricted}
\end{algorithm}

The algorithm works by fixing a \emph{reference} candidate $c$ in each iteration (in the first iteration, the leftmost candidate $c_1$ is the reference). Let $c^L \coloneqq \arg\min_{c_i \in \C \, : \, c_i \leq c+\theta} |c_i - (c+\theta)|$ denote the candidate that is weakly to the left of and  closest to the point $c+\theta$. Similarly, let $c^R \coloneqq \arg\min_{c_i \in \C \, : \, c_i > c+\theta} |c_i - (c+\theta)|$ denote the candidate that is strictly to the right of and closest to $c+\theta$. The algorithm places proxies at $c^L$ and $c^R$, thus creating a \emph{proxy bisector} that coincides with the candidate bisector at $(c^L+c^R)/2$. It then proceeds to the next iteration with $c^R$ as the new reference, and this process continues as long as $c+\theta$ remains within $[0,1]$.

It is easy to see that the algorithm runs in polynomial time. Indeed, each candidate can take the role of the reference at most once, and therefore the algorithm performs at most $m$ iterations of the while-loop. Furthermore, each such iteration takes polynomial time since the candidate locations $c_1,\dots,c_m$ and the parameter $\theta$ are assumed to be rational numbers.

Let $\P$ denote the proxy arrangement returned by the algorithm. To see why $\P$ is $\theta$-representative, we will find it convenient to define a partitioning of the line segment $[0,1]$ into \emph{intervals}. Specifically, let $p_j^L$ and $p_j^R$ denote the proxies added by the algorithm in the $j^\text{th}$ iteration, and let $b_j$ denote the corresponding proxy bisector, i.e., $b_j \coloneqq (p_j^L + p_j^R)/2$. Thus, if there are $t$ iterations overall, then $\P = \{p_j^L,p_j^R\}_{j \in [t]}$ and the set of corresponding proxy bisectors is $\{b_1,\dots,b_t\}$.\footnote{Notice that the output $\P$ of Algorithm~\ref{alg:UpperBound_Restricted} can contain up to $2t$ proxies, and thus there can be up to $2t-1$ proxy bisectors between pairs of adjacent proxies. Among these, $b_j$ denotes the proxy bisector between the pair of proxies $\{p_j^L,p_j^R\}$ created in the $j^\text{th}$ iteration.} 

By construction, each proxy bisector $b_j$ coincides with a candidate bisector between adjacent candidates. Therefore, we can associate each candidate $c_i$ with a unique $j \in \{0,1,\dots,t\}$ such that $b_j < c_i < b_{j+1}$, where $b_0 \coloneqq -\infty$ and $b_{t+1} \coloneqq +\infty$. In other words, we obtain a partitioning of the set of candidates $\C = \C_1 \cup \C_2 \cup \dots \cup \C_{t+1}$, where, for every $j \in [t+1]$, $\C_j \coloneqq \{c_i \in \C : b_{j-1} < c_i < b_j\}$ denotes the set of candidates between the proxy bisectors $b_{j-1}$ and $b_j$. 

Recall that the Voronoi cell of candidate $c_i$ is defined as $V_i \coloneqq \{v \in [0,1]: \top(v) = c_i \}$. For every $j \in [t+1]$, we define the \emph{interval} $I_j \coloneqq \cup_{i \in [m] : c_i \in \C_j} V_i$ as the set of locations of all voters whose favorite candidate is in $\C_j$ (recall that ties are broken according to a directionally consistent tie-breaking rule). As before, we will identify all candidates, voters, and proxies with their locations, and will say that `a candidate $c$ (or a proxy $p$) belongs to interval $I_j$' if the voter located at $v=c$ (or $v=p$) is in $I_j$. Notice that $I_1 \cup I_2 \cup \dots \cup I_{t+1} = [0,1]$.

A useful observation is that all candidates in an interval are $\theta$-close. Indeed, the leftmost candidate in each interval plays the role of a reference candidate during some iteration (let us call such a candidate the reference candidate for the interval). By the candidate selection rule in Line~\ref{algline:UpperBound_Restricted_Left_Candidate}, the rightmost candidate in an interval is $\theta$-close to the reference (i.e., the leftmost) candidate in that interval. Thus, we have that for every $j \in [t+1]$, all candidates in $I_j$ (equivalently, all candidates in $\C_j$) are $\theta$-close.

Fix any $j \in [t+1]$ and consider any voter location $v \in I_j$. In order to prove that the proxy arrangement $\P$ is $\theta$-representative, we need to show that $|\top(v) - \top(p^{v})| \leq \theta$. Since $I_j$ is a union of Voronoi cells, it is easy to see that the favorite candidate of voter $v$ must be in $\C_j$, i.e., $\top(v) \in \C_j$. Furthermore, since all candidates in $\C_j$ are $\theta$-close, it suffices to show that voter $v$'s closest proxy also lies in $I_j$ (i.e., $p^v \in I_j$), as that would imply $\top(p^v) \in \C_j$.

In order to prove that $p^v \in I_j$, observe that any voter $v \in I_j$ prefers the leftmost candidate in $\C_j$ over any candidate in $\C_1 \cup \dots \cup \C_{j-1}$. Similarly, it prefers the rightmost candidate in $\C_j$ over any candidate in $\C_{j+1} \cup \dots \cup \C_{t+1}$. Since the proxy bisectors $b_{j-1}$ and $b_j$ coincide with the candidate bisectors, a similar observation holds for the proxies. That is, voter $v$ is closer to the proxy $p^R_{j-1}$ than to any other proxy to its left, and is closer to the proxy $p^L_j$ than to any other proxy to its right. Thus, voter $v$'s closest proxy must be either $p^R_{j-1}$ or $p^L_j$ (i.e., $p^v \in \{p^R_{j-1}, p^L_j\}$), both of which lie in $I_j$, as desired.

Finally, we will show that the total number of proxies in $\P$ satisfies the desired bound. Observe that the total number of bisectors computed by the algorithm is $t < \frac{1}{\theta}$. This is because for any $j \in [t]$, the reference candidates for the intervals $I_j$ and $I_{j+1}$ are separated by a distance strictly greater than $\theta$ (Line~\ref{algline:UpperBound_Restricted_Right_Candidate}), and the distance between the extreme candidates is equal to $1$. If $\theta = \frac{1}{q}$ for some $q \in \mathbb{N}$, we have $t < q$, and since $t$ and $q$ are both integers, we have $t \leq q-1 = \frac{1}{\theta} - 1$. Otherwise, we have $t \leq \lfloor\frac{1}{\theta}\rfloor$. The desired bound now follows by observing that there are at most two distinct proxies per bisector.
\end{proof}

\section{Proof of Proposition~\ref{prop:LowerBound_Restricted}}
\label{subsubsec:Proof_LowerBound_Restricted}

\LowerBoundRestricted*
\begin{proof}
Let $p \in \mathbb{N}$ denote the unique positive integer such that $\frac{1}{p} \leq \theta < \frac{1}{p-1}$. Observe that $p-1$ equals $\frac{1}{\theta} - 1$ when $\frac{1}{\theta} \in \mathbb{N}$, and equals $\lfloor\frac{1}{\theta}\rfloor$ otherwise. Therefore, it suffices to show that any $\theta$-representative arrangement requires $2p-2$ proxies.

Our construction of the lower bound instance will depend on whether $p$ is even or odd. Specifically, let $\eps \coloneqq \frac{1-(p-1)\theta}{5(p-1)/2}$ (if $p$ is odd) or $\eps \coloneqq \frac{1-(p-1)\theta}{(5p/2-3)}$ (if $p$ is even), and observe that $\eps > 0$ in both cases. The distinction between even and odd cases is made in order to ensure that the distance between the extreme candidates is equal to $1$.

The instance consists of the candidate $c_1$ on the left, followed by a sequence of pairs of candidates $\{c_{2i},c_{2i+1}\}$ for $i \in \mathbb{N}$ such that each pair is at a distance of $\theta+\eps$ from the preceding pair, and the gap between the candidates in each pair alternates between $\eps$ and $2\eps$ (see \Cref{fig:LowerBound_Restricted} for an example). In other words, the sequence of distances between adjacent pair of candidates from left to right is given by $\theta+\eps,\eps,\theta+\eps,2\eps,\theta+\eps,\eps$ and so on. The rightmost candidate is $c_{2p-1}$. It is easy to see that all candidate locations are rational, and that the distance between the extreme candidates is equal to $1$.

We will now show that any $\theta$-representative proxy arrangement for the aforementioned instance requires at least $2p-2$ proxies, which, by the above observation, will imply the desired bound. Notice that $c_1$ and $c_2$ are $\theta$-far since $\eps > 0$. Therefore, under any $\theta$-representative proxy arrangement, there must exist a proxy bisector (between adjacent proxies) that coincides with the candidate bisector between $c_1$ and $c_2$ (contrapositive of \Cref{lem:Bisectors_Non_Coincident}). The only way this could happen under restricted positioning is if there are proxies on both $c_1$ and $c_2$. 

Next, observe that the candidates $c_3$ and $c_4$ are also $\theta$-far. Therefore, under any $\theta$-representative proxy arrangement, there must exist a proxy bisector between adjacent proxies that coincides with the candidate bisector between $c_3$ and $c_4$. Notice that the proxy at $c_2$ cannot be one of these proxies, since that would require placing another proxy at the location $c_4+\eps$ which does not have any candidate and would violate restricted positioning assumption. Therefore, there must exist another pair of proxies on $c_3$ and $c_4$.

By a similar argument, the proxy bisector that coincides with the candidate bisector between $c_5$ and $c_6$ cannot correspond to the proxy at $c_4$, since that would involve placing another proxy at $c_6+2\eps$, which does not have any candidate. Therefore, it is necessary to place proxies at $c_5$ and $c_6$.

Continuing in this manner, we observe that the desired arrangement must place proxies on all candidates except for the last one. Thus, the number of proxies required is $2p-2$, as desired.
\end{proof}

\section{Proof of Theorem~\ref{thm:Unrestricted_Proxies_Optimal}}
\label{subsubsec:Proof_Unrestricted_Proxies_Optimal}


%
%
%
%
%
This section provides an algorithm for constructing an optimal $\theta$-representative proxy arrangement. Unlike the restricted case, the set of possible proxy locations is no longer finite, which prohibits the use of simple dynamic programming techniques (as in \Cref{thm:Opt_Restricted}) that iterate over the discrete set of feasible locations. Instead, we must perform a careful analysis of the feasible \emph{regions} of $\theta$-representative proxies, and then prove that such regions can be computed efficiently.

Before discussing the algorithm, let us set up some helpful notation. Recall that for any $i \in [m]$, the Voronoi cell of candidate $c_i \in \C$ is the set of locations of all voters whose favorite candidate is $c_i$, i.e., $V_i = \{v \in [0,1] : \top(v) = c_i\}$. We will find it convenient to define, for every $i \in [m]$, the \emph{extended Voronoi cell} $\myoverbar{V}_i \coloneqq \{v \in (-\infty,\infty) : \top(v) = c_i\}$ as the analog of Voronoi cell where the voters are hypothesized to lie anywhere on the real line. Thus, $\myoverbar{V}_1 = (-\infty,0) \cup V_1$, $\myoverbar{V}_m = V_m \cup (1,\infty)$, and $\myoverbar{V}_i = V_i$ for all $i \in \{2,3,\dots,m-1\}$. Additionally, given any $\theta \in (0,1)$, we will define $\myoverbar{V}_i^+ \coloneqq \cup_{c_\ell \in \C \, : \, c_\ell > c_i+\theta} \myoverbar{V}_\ell$ as the set of all locations of voters (lying anywhere on the real line) whose favorite candidate is $\theta$-far from $c_i$ on the right, and $\myoverbar{V}_i^- \coloneqq \cup_{c_\ell \in \C \, : \, c_\ell < c_i-\theta} \myoverbar{V}_\ell$ as the analogous set on the left side. We will call $\myoverbar{V}^+_i$ and $\myoverbar{V}^-_i$ the \emph{right infeasible} and \emph{left infeasible} regions, respectively, as they denote the regions where the proxy $p^v$ is not allowed to lie if voter $v$'s favorite candidate is $c_i$. Note that the idea of allowing the voters to lie anywhere on the real line is only used for defining the quantities $\myoverbar{V}_i$, $\myoverbar{V}^+_i$, and $\myoverbar{V}^-_i$.

We will also define the Voronoi cells of the proxies. Given a proxy arrangement $\P = \{p_1,\dots,p_k\}$, the Voronoi cell $W_j$ of proxy $p_j$ is the set of all locations of voters whose closest proxy is $p_j$, i.e., $W_j \coloneqq \{v \in [0,1]: |v - p_j| \leq |v - p_\ell| \text{ for any } \ell \neq j\}$, where ties are broken according to a directionally-consistent tie-breaking rule (\Cref{defn:Tie-Breaking}). Notice that $v \in W_j$ if and only if $p_j = p^v$.

Let us start with a characterization of $\theta$-representative proxy arrangements in terms of Voronoi cells of proxies and the left/right infeasible regions that will be useful in our algorithm (\Cref{lem:Unrestricted_Proxies_Characterization}).

\begin{restatable}{lemma}{Characterization}
A proxy arrangement $\P = \{p_1,\dots,p_k\}$ is $\theta$-representative if and only if for every proxy $p_j \in \P$,
$$W_j \cap \{\myoverbar{V}_i^+ \cup \myoverbar{V}_i^-\} = \emptyset,$$
where $c_i \in \C$ is the unique candidate such that $\top(p_j)=c_i$.
\label{lem:Unrestricted_Proxies_Characterization}
\end{restatable}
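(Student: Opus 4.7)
The plan is to prove this characterization by simply unwinding the relevant definitions, since the statement is essentially a restatement of $\theta$-representation in terms of Voronoi cells of proxies and the previously defined infeasible regions. The key observation to keep in mind throughout is that the extended Voronoi cells $\{\myoverbar{V}_1,\dots,\myoverbar{V}_m\}$ partition $\mathbb{R}$ (modulo ties, which are broken consistently), so for any $v \in [0,1]$, we have $|\top(v) - c_i| > \theta$ if and only if $v \in \myoverbar{V}_\ell$ for some $\ell$ with $|c_\ell - c_i| > \theta$, i.e., if and only if $v \in \myoverbar{V}_i^+ \cup \myoverbar{V}_i^-$.

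For the forward direction, I would fix a proxy $p_j \in \P$ with $\top(p_j) = c_i$ and take any $v \in W_j$. Since $v \in W_j$, we have $p^v = p_j$, and therefore $\top(p^v) = c_i$. By $\theta$-representativeness, $|\top(v) - c_i| \leq \theta$, so by the observation above, $v \notin \myoverbar{V}_i^+ \cup \myoverbar{V}_i^-$. Since this holds for every $v \in W_j$, we conclude $W_j \cap (\myoverbar{V}_i^+ \cup \myoverbar{V}_i^-) = \emptyset$, as required.

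For the backward direction, I would take an arbitrary voter location $v \in [0,1]$ and show $|\top(v) - \top(p^v)| \leq \theta$. Let $p_j = p^v$ be the nearest proxy (so $v \in W_j$) and let $c_i = \top(p_j)$. The hypothesis gives $W_j \cap (\myoverbar{V}_i^+ \cup \myoverbar{V}_i^-) = \emptyset$, hence $v \notin \myoverbar{V}_i^+ \cup \myoverbar{V}_i^-$. By the partitioning observation, this forces $\top(v) = c_\ell$ for some $\ell$ with $|c_\ell - c_i| \leq \theta$, so $|\top(v) - \top(p^v)| = |c_\ell - c_i| \leq \theta$, as needed.

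Both directions are essentially a single application of the definitions, so I do not anticipate any real obstacle. The only subtlety worth flagging explicitly is the role of the tie-breaking rule in \Cref{defn:Tie-Breaking}: both $\top(\cdot)$ and the Voronoi cells $W_j$, $\myoverbar{V}_i$ are defined using directionally consistent tie-breaking, which guarantees that every $v \in [0,1]$ belongs to exactly one $W_j$ and exactly one $\myoverbar{V}_\ell$, so the arguments above are unambiguous.
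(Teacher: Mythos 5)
Your proof is correct and follows essentially the same route as the paper's: both directions are just an unwinding of the definitions of $W_j$, $\top(\cdot)$, and the infeasible regions $\myoverbar{V}_i^{\pm}$ (the paper phrases each direction as a proof by contradiction, while you argue directly, but the content is identical). Your explicit remark about the tie-breaking rule ensuring the partitions are unambiguous is a sensible addition that the paper leaves implicit.
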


\begin{proof}
To prove the forward direction, suppose, for contradiction, that for some proxy $p_j \in \P$, there exists a voter $ v \in W_j \cap \{\myoverbar{V}_i^+ \cup \myoverbar{V}_i^-\}$. Then, $\top(v) \in \{c_\ell \in \C : |c_\ell - c_i| > \theta\}$ and $\top(p^v) =  \top(p_j) = c_i$, which violates $\theta$-representation.

In the reverse direction, suppose, for contradiction, that $\P$ is not $\theta$-representative. Then, there must exist a voter $v \in [0,1]$ such that $|\top(v) - \top(p^v)| > \theta$. Let $c_\ell \coloneqq \top(v)$ denote the favorite candidate of voter $v$, $p_j \coloneqq p^v$ denote its closest proxy, and $c_i \coloneqq \top(p_j)$ denote the favorite candidate of $p_j$. Thus, $|c_\ell - c_i| > \theta$, which means that $v \in \myoverbar{V}_i^+ \cup \myoverbar{V}_i^-$. Furthermore, $v \in W(p_j)$. Therefore, $W_j \cap \{\myoverbar{V}_i^+ \cup \myoverbar{V}_i^-\} \neq \emptyset$, which is a contradiction.
\end{proof}

\begin{algorithm*}[t]
\DontPrintSemicolon
\footnotesize
 \linespread{1.1}
\KwIn{A \ProxyVoting{} instance $\I = \langle \C, \theta \rangle$ and a positive integer $k \in \mathbb{N}$.}
\KwOut{YES (if $k$ proxies can form a $\theta$-representative arrangement) or NO (otherwise).}
\BlankLine
\tcp{Initialize the feasibility set $F^1$ for the leftmost proxy}
\BlankLine
\tikzmk{A}
$F^{1,1} \leftarrow \myoverbar{V_1}$\label{algline:Optimal_Unrestricted_Phase1_Start}
\For{$i \in \{2,3,\dots,m\}$}{
    \uIf{$|c_i - c_1| \leq \theta$}{$F^{1,i} = \myoverbar{V}_i$\Comment*{Include $\myoverbar{V}_i$ in the feasibility set if $c_i$ is $\theta$-close to $c_1$}}
    \Else{$F^{1,i} = \emptyset$\Comment*{Discard if $c_i$ is $\theta$-far from $c_1$}}
}
$F^1 \leftarrow \cup_{i \in [m]} F^{1,i}$\label{algline:Optimal_Unrestricted_Phase1_End}\;
\nonl \tikzmk{B}
\boxit{mygray}
\nl
\tcp{Compute the feasibility set $F^j$ for the $j^\text{th}$ proxy $p_j$ for $j \in \{2,\dots,k\}$}
\BlankLine
\tikzmk{A}
Initialize $F^{j,i} \leftarrow \emptyset$ for all $j \in \{2,\dots,k\}$ and $i \in [m]$.\;
\For{$j \in \{2,\dots,k\}$}{
    \For{$h \in [m]$}{
        \For{$i \geq h$}{
            \uIf{$h=i$}{
                $\alpha \leftarrow \inf F^{j-1,h}$\label{algline:Optimal_Unrestricted_Same_Cell_1}\;
                $F^{j,i} \leftarrow F^{j,i} \cup \{ \myoverbar{V}_i \cap (\alpha,\infty) \}$\Comment*{Include all $p_j \in \myoverbar{V}_i$ for which $p_j > p_{j-1}$ for some $p_{j-1} \in F^{j-1,i}$}\label{algline:Optimal_Unrestricted_Same_Cell_2}
            }
            \uElseIf{$\myoverbar{b}_h \geq \myunderbar{b}_i$}
                    {
                    \tcp{Iterate over all maximal convex subsets $S \in F^{j-1,h}$}
                        \For{$S \in F^{j-1,h}$}{
                            $x \leftarrow \inf S$ and $y \leftarrow \sup S$\;
                            $x' \leftarrow \mu(x,\myoverbar{b}_h)$ and $y' \leftarrow \mu(y,\myunderbar{b}_i)$\Comment*{Reflect the endpoints of $S$ about $\myoverbar{b}_h$ and $\myunderbar{b}_i$}
                            $F^{j,i} \leftarrow F^{j,i} \cup \{ [y',x'] \cap \myoverbar{V}_i \}$\Comment*{Compute the intersection of the reflection with $\myoverbar{V}_i$}\label{algline:Optimal_Unrestricted_Reflection}
                        }
                }
        }
    }
    $F^j \leftarrow \cup_{i \in [m]} F^{j,i}$\;
}
\tikzmk{B}
\boxit{mygray}
\uIf{$F^k \cap \{\cup_{c_i \in \C : |c_i-c_m|\leq\theta} \myoverbar{V}_i\} \neq \emptyset$\label{algline:Optimal_Unrestricted_Final_Feasibility_Check}}{
\BlankLine
\KwRet{YES}}
\Else{\KwRet{NO}}
\caption{Algorithm for determining the optimal number of proxies under unrestricted positioning}
\label{alg:Optimal_Unrestricted}
\end{algorithm*}

\paragraph{Description of the algorithm}
Consider a pair of adjacent proxies located at $p_{j-1}$ and $p_j$ such that $\top(p_{j-1}) = c_\ell$ and $\top(p_j) = c_i$. We say that the proxy locations $p_{j-1}$ and $p_j$ are \emph{mutually feasible} if the Voronoi cell $W_{j-1}$ of the proxy $p_{j-1}$ does not overlap with the right infeasible region $\myoverbar{V}^+_\ell$ of the candidate $c_\ell$, and the Voronoi cell $W_j$ of the proxy $p_j$ does not overlap with the left infeasible region $\myoverbar{V}^-_i$ of the candidate $c_i$. From \Cref{lem:Unrestricted_Proxies_Characterization}, it follows that a proxy arrangement is $\theta$-representative if and only if all adjacent pairs of proxies are mutually feasible (in addition to the leftmost and rightmost proxies being feasible for the leftmost and rightmost voters, respectively).

At a high level, our algorithm (Algorithm~\ref{alg:Optimal_Unrestricted}) uses dynamic programming to determine the ``feasibility set'' of each proxy using the feasibility set of its preceding proxy. More concretely, the feasibility set of the leftmost proxy $p_1$ is given by $F^1 \coloneqq \cup_{c_i \in \C \, : \, c_i \leq c_1 + \theta} \myoverbar{V}_i$, which is the set of all locations of $p_1$ for which $\theta$-representation is satisfied for all voters in $[0,p_1]$. For any $j \in \{2,3,\dots,k\}$, the feasibility set $F^j$ of the proxy $p_j$ is the set of all locations of $p_j$ (on the real line) for which there exists some feasible location $p_{j-1} \in F^{j-1}$ of the previous proxy such that $p_j$ and $p_{j-1}$ are mutually feasible. Thus, if all adjacent pairs of proxies in $\{p_1,p_2,\dots,p_j\}$ are mutually feasible, then $\theta$-representation is satisfied for all voters in $W_1 \cup W_2 \cup \dots \cup W_{j-1}$.

Once the feasibility set $F^k$ of the rightmost proxy $p_k$ has been determined, the algorithm checks whether it overlaps with the set $\cup_{c_i \in \C : |c_i-c_m|\leq\theta} \myoverbar{V}_i$. The latter is the set of all locations of the rightmost proxy for which $\theta$-representation is satisfied for the rightmost voter in $[0,1]$. 
If the overlap is non-empty, then there exists a feasible location of the proxy $p_k$ that is $\theta$-representative for the voters in $W_k$. By definition of the feasibility set, there must exist a feasible location $p_{k-1}$ of the previous proxy such that $p_{k-1}$ and $p_k$ are mutually feasible, thus implying $\theta$-representation for voters in $W_{k-1}$. Similarly, there must exist a feasible location of the proxy $p_{k-2}$ that is mutually feasible with $p_{k-1}$, and so on. Continuing backwards in this manner, we obtain a set of proxy locations $p_1,\dots,p_k$ wherein the adjacent pairs are mutually feasible, which immediately implies $\theta$-representation. On the other hand, the absence of an overlap certifies that with $k$ proxies, there is no $\theta$-representative proxy arrangement in the given instance.

\paragraph{Computing the feasibility sets}
For every $j \in [k]$ and every $i \in [m]$, let $F^{j,i} \coloneqq F^j \cap \myoverbar{V}_i$ denote the restriction of the feasibility set of the proxy $p_j$ to the extended Voronoi cell of the candidate $c_i$. Notice that $F^j = \cup_{c_i \in \C} F^{j,i}$. To compute $F^{j,i}$, the algorithm considers $F^{j-1,h}$ for a fixed candidate $c_h \leq c_i$, and computes all locations in $F^{j,i}$ that are mutually feasible with some location in $F^{j-1,h}$. To help with the forthcoming discussion on the computation of $F^{j,i}$, let us set up the relevant notation below.

For every candidate $c_h \in \C$, let $\myoverbar{b}_h$ denote the candidate bisector between the furthest $\theta$-close and the closest $\theta$-far candidates to $c_h$ on the right. That is, $\myoverbar{b}_h \coloneqq \frac{c_\ell + c_{\ell+1}}{2}$ where $c_\ell \coloneqq \arg\min_{c_i \in \C \, : \, c_i \leq c_h+\theta} |c_i - (c_h+\theta)|$. Similarly, let $\myunderbar{b}_h$ denote the analogous bisector on the left side, i.e., $\myunderbar{b}_h \coloneqq \frac{c_{\ell-1} + c_{\ell}}{2}$ where $c_\ell \coloneqq \arg\min_{c_i \in \C \, : \, c_i \geq c_h-\theta} |c_i - (c_h-\theta)|$. In addition, we will define a \emph{mirror} function $\mu: \mathbb{R} \times [0,1] \rightarrow \mathbb{R}$ such that $\mu(x,b)$ denotes the \emph{reflection} of the point $x \in \mathbb{R}$ about the point $b \in [0,1]$; thus, $b = (x + \mu(x,b))/2$.

Let us now discuss the computation of $F^{j,i}$ using $F^{j-1,h}$. Notice that if $h=i$, then any location $x \in \myoverbar{V}_i$ is feasible for proxy $p_j$ as long as there exists some $y \in F^{j-1,h}$ such that $y < x$ (since proxies are assumed to be non-overlapping). Thus, if $\alpha \coloneqq \inf F^{j-1,h}$, then all locations in $\myoverbar{V}_i$ strictly to the right of $\alpha$ are feasible for $p_j$ (Lines~\ref{algline:Optimal_Unrestricted_Same_Cell_1} and~\ref{algline:Optimal_Unrestricted_Same_Cell_2} of Algorithm~\ref{alg:Optimal_Unrestricted}). If $h > i$, then we obtain infeasibility due to $p_{j-1} > p_j$. 

This leaves us with the case $h < i$. Here, we will leverage the characterization in \Cref{lem:Unrestricted_Proxies_Characterization} to compute the desired feasibility set. 
Notice that for a proxy $p_{j-1}$ with $\top(p_{j-1})=c_h$, \Cref{lem:Unrestricted_Proxies_Characterization} immediately implies that the next proxy bisector must be (weakly) to the left of $\myoverbar{b}_h$. Similarly, it implies that for a proxy $p_j$ with $\top(p_{j})=c_i$, the previous proxy bisector must be (weakly) to the right of $\myunderbar{b}_i$. Hence, it cannot be that $\top(p_{j-1})=c_h$ and $\top(p_j)=c_i$ when $\myoverbar{b}_h < \myunderbar{b}_i$.

Thus, we only need to consider the case when $\myoverbar{b}_h \geq \myunderbar{b}_i$. Note that for any $j \in [k]$ and any $i \in [m]$, the set $F^{j,i}$ could, in general, consist of multiple disjoint convex sets that are either open, closed, or semi-open intervals, depending on the tie-breaking rule (we will shortly prove that the number of such sets is polynomially bounded). Let $n^{j,i} \coloneqq |F^{j,i}|$ denote the number of maximal convex subsets of $F^{j,i}$, and let $n^j \coloneqq \sum_{i \in [m]} n^{j,i}$. Note that $n^j$ is an upper bound on the number of maximal convex subsets in $F^j$, since the sets from different $F^{j,i}$'s could be merged when we take their union.

\begin{figure}[h]
\centering
%
\begin{tikzpicture}[line width=0.8pt]
	\tikzset{candidate/.style = {draw,shape=rectangle,scale=1.1}}
	\tikzset{proxy/.style = {circle, draw, red, minimum size=3pt, inner sep=0pt, outer sep=0pt}}
	\tikzmath{\proxyYoffset=-0.3;\proxyXoffset=0.2;}
	    \fill [gray!20] (-7,-0.3) rectangle ++(2.5,0.6);
	    \fill [gray!20] (-1.5,-0.3) rectangle ++(4,0.6);
		\draw[help lines, dashed, step = 0.5cm] (-7.5, -0.45) grid (3, 0.45);
		\node[candidate,fill=blue!20,label=above:{$c_h$}] at (-5,0){};
		\node[candidate] at (-4,0) {};
		\node[candidate] at (-3,0) {};
		\node[candidate] at (-2,0) {};
		\node[candidate,fill=blue!20,label=above:{$c_i$}] at (-1,0) {};
		%
	    \draw[line width=1.2pt, dashed] (-3.5, -0.4) -- (-3.5, 0.4);
	    \node[label=above:{$\myunderbar{b}_i$}] at (-3.5,0.3) {};
	    \draw[line width=1.2pt, dashed] (-2.5, -0.4) -- (-2.5, 0.4);
	    \node[label=above:{$\myoverbar{b}_h$}] at (-2.5,0.3) {};
	    \node[circle,minimum size=1.5pt,inner sep=1pt,fill=red,label=below:{$x$}] (test) at (-7,0) {};
	    \node[circle,minimum size=1.5pt,inner sep=1pt,fill=red,label=below:{$y$}] (test) at (-6.5,0) {};
	    \fill [red] (-6.99,0.02) rectangle (-6.49,-0.02);
	    \node[circle,minimum size=1.5pt,inner sep=1pt,fill=red,label=below:{$y'$}] (test) at (-0.5,0) {};
	    \node[circle,minimum size=1.5pt,inner sep=1pt,fill=red,label=below:{$x'$}] (test) at (2,0) {};
	    \fill [red] (-0.49,0.02) rectangle (1.99,-0.02);
	    %
	    \dimline[extension start length=0,extension end length=0]{(-5,-0.5)}{(-3,-0.5)}{$\theta$};
	    %
	    \dimline[extension start length=0,extension end length=0]{(-3,-0.5)}{(-1,-0.5)}{$\theta$};
\end{tikzpicture}
\caption{Computing the set $F^{j,i}$ using the set $S = [x,y]$ in $F^{j-1,h}$. Here, $x' \coloneqq \mu(x,\overline{b}_h)$ and $y' \coloneqq \mu(y,\underline{b}_i)$. The shaded regions denote the extended Voronoi cells of the candidates. Note that the set $[y',x']$ falls completely inside the extended Voronoi cell $\overline{V}_i$  of candidate $c_i$, and therefore the set $S$ is of Type I.}
\label{fig:Optimal_Unrestricted_Reflection}
\end{figure}
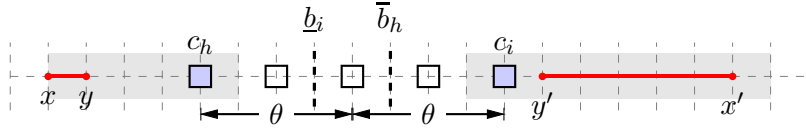

Consider any maximal convex subset $S \in F^{j-1,h}$. For ease of discussion, let us assume that $S$ is a closed interval of the form $[x,y]$ as shown in \Cref{fig:Optimal_Unrestricted_Reflection} (the case of open or semi-open intervals can be handled similarly). By \Cref{lem:Unrestricted_Proxies_Characterization}, any point $z \in F^{j,i}$ that is mutually feasible with some point in $S$ must be (weakly) to the left of the reflection of $x$ about the bisector $\myoverbar{b}_h$, i.e., $z \leq x'$ where $x' \coloneqq \mu(x,\myoverbar{b}_h)$. This is because the proxy bisector between $p_{j-1}$ and $p_j$ has to be (weakly) to the left of $\myoverbar{b}_h$ by \Cref{lem:Unrestricted_Proxies_Characterization}. Also, again by \Cref{lem:Unrestricted_Proxies_Characterization}, for any point $z \in F^{j,i}$, we must have that the reflection of $z$ about the bisector $\myunderbar{b}_i$ is (weakly) to the left of $y$. This is equivalent to asking that $z \geq y'$, where $y' \coloneqq \mu(y,\myunderbar{b}_i)$. Thus, the contribution of the set $S$ to $F^{j,i}$ is given by $\myoverbar{V}_i \cap [y',x']$. The contributions of all maximal convex subsets in $F^{j-1,h}$ to $F^{j,i}$ are taken together as a union in Line~\ref{algline:Optimal_Unrestricted_Reflection} of Algorithm~\ref{alg:Optimal_Unrestricted}.

We will finish this segment by arguing that the number of maximal convex subsets is polynomially bounded. Formally, we will show (via induction) that for every $j \in [k]$, $n^j = \sum_{i \in [m]} n^{j,i} \leq j \cdot 2m$. The base case of $j=1$ is easy to verify: Indeed, for every $i \in [m]$, the set $F^{1,i}$ is either empty or consists of a single maximal convex subset, namely the extended Voronoi cell of candidate $c_i$ (Lines~\ref{algline:Optimal_Unrestricted_Phase1_Start} to~\ref{algline:Optimal_Unrestricted_Phase1_End} in Algorithm~\ref{alg:Optimal_Unrestricted}). Thus, for every $i \in [m]$, $n^{1,i} \leq 1$, and therefore $n^1 \leq m$.

Now suppose $n^{j-1} \leq (j-1) \cdot 2m$. We want to show that $n^j \leq j \cdot 2m$. Fix some $h \in [m]$ and some $i \geq h$, and consider any fixed maximal convex subset $S \in F^{j-1,h}$. Observe that $S$ is of one of the following two types: If the interval $[y',x']$ is completely contained inside the extended Voronoi cell $\myoverbar{V}_i$, we say that $S$ is of \emph{Type I} for $i$ (see \Cref{fig:Optimal_Unrestricted_Reflection}). Otherwise, if $[y',x']$ overlaps partially with $\myoverbar{V}_i$ or not at all, then we say that it is \emph{Type II} for $i$ (see \Cref{fig:Optimal_Unrestricted_Type_II}). 

\begin{figure}[h]
\centering
%
\begin{tikzpicture}[line width=0.8pt]
	\tikzset{candidate/.style = {draw,shape=rectangle,scale=1.1}}
	\tikzset{proxy/.style = {circle, draw, red, minimum size=3pt, inner sep=0pt, outer sep=0pt}}
	\tikzmath{\proxyYoffset=-0.3;\proxyXoffset=0.2;}
	    \fill [gray!20] (-7,-0.3) rectangle ++(2.5,0.6);
	    \fill [gray!20] (-1.5,-0.3) rectangle ++(2.5,0.6);
		\draw[help lines, dashed, step = 0.5cm] (-7.5, -0.45) grid (3, 0.45);
		\node[candidate,fill=blue!20,label=above:{$c_h$}] at (-5,0){};
		\node[candidate] at (-4,0) {};
		\node[candidate] at (-3,0) {};
		\node[candidate] at (-2,0) {};
		\node[candidate,fill=blue!20,label=above:{$c_i$}] at (-1,0) {};
		%
	    \draw[line width=1.2pt, dashed] (-3.5, -0.4) -- (-3.5, 0.4);
	    \node[label=above:{$\myunderbar{b}_i$}] at (-3.5,0.3) {};
	    \draw[line width=1.2pt, dashed] (-2.5, -0.4) -- (-2.5, 0.4);
	    \node[label=above:{$\myoverbar{b}_h$}] at (-2.5,0.3) {};
	    \node[circle,minimum size=1.5pt,inner sep=1pt,fill=red,label=below:{$x$}] (test) at (-7,0) {};
	    \node[circle,minimum size=1.5pt,inner sep=1pt,fill=red,label=below:{$y$}] (test) at (-6.5,0) {};
	    \fill [red] (-6.99,0.02) rectangle (-6.49,-0.02);
	    \node[circle,minimum size=1.5pt,inner sep=1pt,fill=red,label=below:{$y'$}] (test) at (-0.5,0) {};
	    \node[circle,minimum size=1.5pt,inner sep=1pt,fill=red,label=below:{$x'$}] (test) at (2,0) {};
	    \fill [red] (-0.49,0.02) rectangle (1,-0.02);
	    \fill [red!20] (1,0.02) rectangle (1.99,-0.02);
	    %
	    \dimline[extension start length=0,extension end length=0]{(-5,-0.5)}{(-3,-0.5)}{$\theta$};
	    %
	    \dimline[extension start length=0,extension end length=0]{(-3,-0.5)}{(-1,-0.5)}{$\theta$};
\end{tikzpicture}
\caption{Illustrating a Type II set $[x,y]$. Observe that the intersection $[y',x'] \cap \overline{V}_i$ is right extremal for $\overline{V}_i$.}
\label{fig:Optimal_Unrestricted_Type_II}
\end{figure}
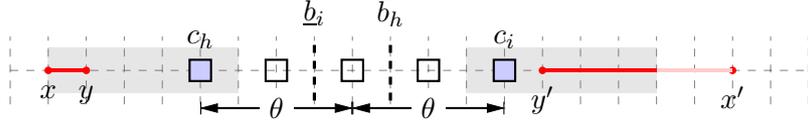

Note that the reflection $x' \coloneqq \mu(x,\myoverbar{b}_h)$ is independent of $i$, and can lie in at most one $\myoverbar{V}_i$. Therefore, each set $S \in F^{j-1,h}$ can be Type I for at most one $i$, thereby contributing at most one maximal convex subset to $F^j$. If $S = [x,y]$ is Type II for $i$, then the intersection $[y',x'] \cap \myoverbar{V}_i$ must be either \emph{left extremal} or \emph{right extremal} in $\myoverbar{V}_i$. A priori, it may seem that each such set $S$ contributes $\O(m)$ convex subsets to $F^{j}$, which might make the cardinality $n^j$ grow exponentially in $j$. However, the key observation is that when taking the union in Line~\ref{algline:Optimal_Unrestricted_Reflection}, \emph{all} left/right extremal subsets within each candidate's extended Voronoi cell are \emph{merged}, and thus we never have more than two extremal sets per candidate. Therefore, the combined contribution of all Type II sets $S \in F^{j-1,h}$ for all $h \in [m]$ towards $F^{j,i}$ (where $i \in [m]$ is fixed) results in an additive increase of at most $2$ in the cardinality of $F^{j,i}$, which, in turn, results in an additive increase of at most $2m$ in the cardinality of $F^j$. Thus, $n^j \leq n^{j-1} + 2m \leq (j-1) \cdot 2m + 2m = j \cdot 2m$, as desired.

With the above setup in hand, let us now prove our main result.

\UnrestrictedProxiesOptimal*
\begin{proof}
We will show that Algorithm~\ref{alg:Optimal_Unrestricted} computes the desired proxy arrangement.

Let us start with the running time analysis. As shown previously, for every $j \in [k]$, the number of maximal convex subsets in each feasibility set $F^j$ is at most $j \cdot 2m$. Furthermore, each such set can be computed efficiently, since each reflection or intersection operation takes polynomial time (note that all maximal convex subsets, which are either closed, open, or semi-open intervals, can be stored in terms of the end-points of their closures, and all end-points are rational). Thus, the overall running time is polynomial.

Notice that although the pseudocode of Algorithm~\ref{alg:Optimal_Unrestricted} is shown to return only a YES/NO outcome, one can actually compute a feasible proxy arrangement (if it exists) as follows: Pick \emph{any} location in $F^k \cap \{\cup_{c_i \in \C : |c_i-c_m|\leq\theta} \myoverbar{V}_i\}$ for the proxy $p_k$ (such a location must exist by the feasibility check in Line~\ref{algline:Optimal_Unrestricted_Final_Feasibility_Check}). Then, compute a mutually feasible location of the proxy $p_{k-1}$. This can be computed by a procedure similar to the one used in computing $F^{j,i}$ using $F^{j-1,h}$. Next, the location of the proxy $p_{k-2}$ is computed, and so on. Finally, the $\theta$-representation of the proxy arrangement returned by the algorithm follows from the mutual feasibility condition maintained by the algorithm in each iteration.
\end{proof}

\section{Proof of Theorem~\ref{thm:UpperBound_Unrestricted}}
\label{subsubsec:Proof_UpperBound_Unrestricted}

\begin{algorithm*}
\DontPrintSemicolon
\footnotesize
 \linespread{1.1}
\KwIn{A \ProxyVoting{} instance $\I = \langle \C, \theta \rangle$.}
\KwOut{A $\theta$-representative arrangement of proxies.}
\BlankLine
\tcp{\scriptsize{Phase 1: Compute locations of proxy bisectors}}
\tikzmk{A}
Initialize $c \leftarrow c_1$\Comment*{Start with $c_1$ as the reference candidate}
Initialize $j \leftarrow 1$\Comment*{$j$ is the index of the proxy bisector}
\While{$c + \theta < 1$\label{algline:UpperBound_Unrestricted_Location_Check}}{
\tcp{$c^L$ and $c^R$ are the candidates closest to $(c+\theta)$ on the left and the right side, respectively}
$c^L \leftarrow \arg\min_{c_i \in \C \, : \, c_i \leq c+\theta} |c_i - (c+\theta)|$\;
$c^R \leftarrow \arg\min_{c_i \in \C \, : \, c_i > c+\theta} |c_i - (c+\theta)|$\label{algline:UpperBound_Unrestricted_Bisector_Selection}\;
$b_i = (c^L+c^R)/2$\label{algline:UpperBound_Unrestricted_Bisector_Location}\Comment*{Bisector of $c^L$ and $c^R$}
$c \leftarrow c^R$\Comment*{Repeat with $c^R$ as the new reference}
$j \leftarrow j+1$
}
\nonl \tikzmk{B}
\boxit{mygray}
\tcp{\scriptsize{Phase 2: Compute proxy locations}}
\oldnl \tikzmk{A}
$B \leftarrow \{b_1,b_2,\dots,b_t\}$\Comment*{Set of proxy bisectors computed in Phase 1}
\For{$j \in [t-1]$}{$w_j \leftarrow b_{j+1} - b_{j}$\Comment*{Compute widths of intervals between consecutive bisectors}}
$W \leftarrow \{w_1,w_2,\dots,w_{t-1}\}$\Comment*{Set of interval widths}
\BlankLine
\tcp{Initialize a proxy arrangement with two proxies per bisector}
Initialize $\eps \leftarrow \frac{1}{3} \cdot \min_{i \in [m-1]} |c_{i+1} - c_i|$\;
\For{$j \in [t]$}{
	$p_j^L = b_j - \eps$\label{algline:UpperBound_Unrestricted_Proxy_Initialization_L}\;
	$p_j^R = b_j + \eps$\label{algline:UpperBound_Unrestricted_Proxy_Initialization_R}
}
\BlankLine
\tcp{Move $p_j^L$ and $p_j^R$ away from the bisector $b_j$ in opposite directions until a ``collision'' occurs}
\For{$\ell \in [t-1]$}{
	$\Delta \leftarrow 0.5 \cdot (\min^{(\ell)} W - \min^{(\ell-1)} W)$\;\Comment{The step size $\Delta$ is chosen as half of the difference between the $\ell^\text{th}$ and $(\ell-1)^\text{th}$ smallest elements in $W$ (i.e., difference in the widths of the $\ell^\text{th}$ and $(\ell-1)^\text{th}$ smallest intervals); here, $\min^{(0)} W \coloneqq 2\eps$.}
	\BlankLine
	\tcp{Move the left and right proxies in opposite directions by a distance $\Delta$}
	\For{$j \in [t]$}{
		\uIf{$b_j$ is not ``frozen''}{
			$p_j^L \leftarrow p_j^L - \Delta$\label{algline:UpperBound_Unrestricted_Expansion_L}\;
			 $p_j^R \leftarrow p_j^R + \Delta$\label{algline:UpperBound_Unrestricted_Expansion_R}}
	}
	\BlankLine
\tcp{Check for ``collision'' events}
	\For{$j \in [t-1]$}{
		\uIf{$p_j^R = p_{j+1}^L$\label{algline:UpperBound_Unrestricted_Proxy_Collision}}{Fix the locations of $p_j^L$, $p_j^R$, $p_{j+1}^L$, and $p_{j+1}^R$\;
		Mark the proxy bisectors $b_j$ and $b_{j+1}$ as ``frozen''}
	}
}
\tikzmk{B}
\boxit{mygray}
\KwRet{$\P = \{p_j^L,p_j^R\}_{j \in [t]}$}
\caption{Algorithm for computing upper bound on no. of proxies under unrestricted positioning}
\label{alg:UpperBound_Unrestricted}
\end{algorithm*}

\UpperBoundUnrestricted*
\begin{proof}
We will show that Algorithm~\ref{alg:UpperBound_Unrestricted} computes the desired proxy arrangement.

Let us start by showing that Algorithm~\ref{alg:UpperBound_Unrestricted} terminates in polynomial time. Observe that in Phase 1, the algorithm performs at most $m$ iterations of the while-loop (since a candidate can be a reference at most once) and each such iteration takes polynomial time (since the candidate locations $c_1,\dots,c_m$ and the parameter $\theta$ are assumed to be rational numbers). Furthermore, since each proxy bisector $b_j$ overlaps with a candidate bisector between adjacent candidates (Line~\ref{algline:UpperBound_Unrestricted_Bisector_Location}), we know that there are at most $m-1$ proxy bisectors in total (i.e., $t \leq m-1$) and that each $b_j$ is rational. The latter implies that all interval widths in the set $W$ as well as the quantity $\Delta$ in each iteration are also rational. Thus, each for-loop iteration in Phase 2 takes polynomial time, and since there can be at most $\O(t^2)$ such iterations, the desired running time guarantee follows.

Let $\P = \{p_j^L,p_j^R\}_{j \in [t]}$ denote the proxy arrangement returned by Algorithm~\ref{alg:UpperBound_Unrestricted}. We will now argue that $\P$ is $\theta$-representative. To prove this, it would be helpful to recall the notion of an ``interval'' from the proof of \Cref{thm:UpperBound_Restricted}. Let $b_1,\dots,b_t$ denote the proxy bisectors computed by the algorithm in Phase 1, where $b_j \coloneqq (p_j^L + p_j^R)/2$. As noted in the proof of \Cref{thm:UpperBound_Restricted}, the candidates can be partitioned into the sets $\C_1,\C_2,\dots,\C_{t+1}$, where $\C_j \coloneqq \{c_i \in \C : b_{j-1} < c_i < b_j\}$ for $j \in [t+1]$, assuming $b_0 \coloneqq -\infty$ and $b_{t+1} \coloneqq +\infty$. For every $j \in [t+1]$, the \emph{interval} $I_j \coloneqq \cup_{i \in [m] : c_i \in \C_j} V_i$ is the set of locations of all voters whose favorite candidate is in $\C_j$. It is easy to verify that all candidates within an interval are $\theta$-close.

Consider any voter location $v \in [0,1]$ such that $v \in I_j$. Thus, $\top(v) \in \C_j$. 
In order to show that the proxy arrangement $\P$ is $\theta$-representative, it suffices to establish that voter $v$'s closest proxy $p^v$ is such that $p^v \in \{p_j^R, p_{j+1}^L\}$. Before proving this claim, let us argue why it gives the desired result. Notice that the initialization in Lines~\ref{algline:UpperBound_Unrestricted_Proxy_Initialization_L} and~\ref{algline:UpperBound_Unrestricted_Proxy_Initialization_R} and the expansion operation in Lines~\ref{algline:UpperBound_Unrestricted_Expansion_L} and~\ref{algline:UpperBound_Unrestricted_Expansion_R} together ensure that $p_j^R \geq b_j + \eps$ and $p_{j+1}^L \leq b_{j+1} - \eps$ at every step. Furthermore, the collision check in Line~\ref{algline:UpperBound_Unrestricted_Proxy_Collision} ensures that $p_j^R \leq p_{j+1}^L$. (Note that the reason the proxies $p_j^R$ and $p_{j+1}^L$ do not ``cross over'' is because the step size $\Delta$ is chosen as the \emph{difference} in widths of the intervals.) Thus, overall, we have that $p_j^R, p_{j+1}^L \in [b_j + \eps, b_{j+1} - \eps]$, which means that the proxies $p_j^R$ and $p_{j+1}^L$ are contained in the interval $I_j$. This, in turn, implies that the favorite candidates of these proxies, namely $\top(p_j^R)$ and $\top(p_{j+1}^L)$, are contained in $\C_j$. Since all candidates within $\C_j$ are $\theta$-close, we get that $|\top(v) - \top(p^v)| \leq \theta$, as desired.

The reason why $p^v \in \{p_j^R, p_{j+1}^L\}$ holds is also similar to that in the proof of \Cref{thm:UpperBound_Restricted}. Indeed, voter $v$ prefers the leftmost candidate in $\C_j$ over any candidate in $\C_1 \cup \dots \cup \C_{j-1}$, and prefers the rightmost candidate in $\C_j$ over any candidate in $\C_{j+1} \cup \dots \cup \C_{t+1}$. Since the proxy bisectors coincide with the candidate bisectors, it follows that voter $v$ is closer to the proxy $p^R_{j-1}$ than to any other proxy to its left, and is closer to the proxy $p^L_j$ than to any other proxy to its right. Thus, voter $v$'s closest proxy must be either $p^R_{j-1}$ or $p^L_j$, i.e., $p^v \in \{p^R_{j-1}, p^L_j\}$.

We will now show that the total number of proxies is $|\P| \leq \frac{3}{2}\lceil\frac{1}{\theta}\rceil$. First, observe that the total number of bisectors computed by the algorithm is $t \leq \lfloor \frac{1}{\theta} \rfloor$. This is because for any $j \in [t]$, the reference candidates for the intervals $I_j$ and $I_{j+1}$ are separated by a distance strictly greater than $\theta$ (Line~\ref{algline:UpperBound_Unrestricted_Bisector_Selection}), and the distance between the extreme candidates is equal to $1$. Next, observe that there is no pair of consecutive intervals $I_j$ and $I_{j+1}$ with two proxies each, since at least one of these four proxies will be absorbed during the expansion step in Phase 2 (Lines~\ref{algline:UpperBound_Unrestricted_Expansion_L} and~\ref{algline:UpperBound_Unrestricted_Expansion_R}). Therefore, any pair of adjacent intervals together contain at most three proxies, resulting in the bound $\frac{3}{2}\lfloor\frac{1}{\theta}\rfloor$ on the total number of proxies. Finally, in the special case when there is only one bisector (i.e., when $t=1$), we have exactly two proxies, and the bound $\frac{3}{2}\lceil\frac{1}{\theta}\rceil$ is satisfied for any given $\theta \in (0,1)$.
\end{proof}

Although \Cref{thm:UpperBound_Unrestricted} provides an absolute upper bound on the number of proxies, this number could, in general, be much larger than the \emph{optimal} number of proxies for the instance at hand. Indeed, in an instance with two candidates located at the endpoints of $[0,1]$ and $\theta=0.01$, the bound in \Cref{thm:UpperBound_Unrestricted} evaluates to $150$, even though it is clear that two proxies suffice. \Cref{cor:Approx_Algo_Unrestricted} shows that Algorithm~\ref{alg:UpperBound_Unrestricted} never uses more than three times the optimal number of proxies for the given instance. 

\begin{restatable}[]{corollary}{ApproxAlgoUnrestricted}
Algorithm~\ref{alg:UpperBound_Unrestricted} is a $3$-approximation algorithm for \ProxyVoting{}.
\label{cor:Approx_Algo_Unrestricted}
\end{restatable}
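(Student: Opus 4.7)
The plan is to combine a structural upper bound on the algorithm's output with a matching instance-dependent lower bound on $\opt(\I)$ that is tied to the same quantity used to bound the output. Let $t$ denote the number of proxy bisectors produced in Phase~1 of Algorithm~\ref{alg:UpperBound_Unrestricted} on instance $\I$, and let $r_0 = c_1, r_1, \dots, r_t$ be the successive values taken by the reference candidate $c$ during Phase~1. By the rule in Line~\ref{algline:UpperBound_Unrestricted_Bisector_Selection}, each $r_j$ is chosen strictly to the right of $r_{j-1}+\theta$, so the $t+1$ reference candidates are pairwise $\theta$-far, satisfying $r_j - r_{j-1} > \theta$ for every $j \in \{1,\dots,t\}$.

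First, I would establish the lower bound $\opt(\I) \geq \lceil (t+1)/2 \rceil$. For this, fix any $\theta$-representative arrangement and, for each $j \in \{0,\dots,t\}$, consider a voter located exactly at $r_j$, whose favorite candidate is $r_j$ itself. By $\theta$-representation, the closest proxy $p^{r_j}$ must have favorite within $\theta$ of $r_j$. If a single proxy $p$ served as both $p^{r_i}$ and $p^{r_k}$ with $k - i \geq 2$, then $|\top(p) - r_i| \leq \theta$ and $|\top(p) - r_k| \leq \theta$ would force $r_k - r_i \leq 2\theta$, contradicting $r_k - r_i > (k-i)\theta \geq 2\theta$. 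Hence any fixed proxy appears at most twice, and only at consecutive indices, in the list $p^{r_0}, \dots, p^{r_t}$, giving at least $\lceil (t+1)/2 \rceil$ distinct proxies.

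Next, I would refine the bookkeeping inside the proof of Theorem~\ref{thm:UpperBound_Unrestricted}: the $t$ proxy bisectors partition the line into $t+1$ intervals $I_1, \dots, I_{t+1}$ where the extreme intervals $I_1$ and $I_{t+1}$ each contain exactly one proxy, every internal interval contains $1$ or $2$ proxies, and no pair of consecutive intervals contains $2$ proxies each. Pairing the intervals as $(I_1, I_2), (I_3, I_4), \dots$ and observing that each such pair contributes at most $3$ proxies (the ``no two consecutive'' property rules out a $2+2$ configuration), together with the fact that a potentially unpaired final interval contributes at most $1$, yields $A(\I) \leq 3\lceil (t+1)/2\rceil$ for the total number $A(\I)$ of proxies returned. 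Dividing by the lower bound from the previous step then gives $A(\I) \leq 3\,\opt(\I)$.

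The main obstacle I anticipate is making this instance-dependent upper bound $A(\I) \leq 3\lceil (t+1)/2\rceil$ airtight: the loose bound $\tfrac{3}{2}\lceil 1/\theta\rceil$ from Theorem~\ref{thm:UpperBound_Unrestricted} can be much larger than $3\,\opt(\I)$ (as in the two-candidate instance with small $\theta$ noted in the paper), so the analysis must be stated in terms of $t$ rather than $\lceil 1/\theta\rceil$. This requires carefully verifying that the ``no two consecutive intervals with two proxies'' property holds for the \emph{final} arrangement after every expand-and-merge event in Phase~2, and then handling the small parity-based arithmetic for $t+1$ even versus odd in the pairing argument. The lower bound half of the argument, by contrast, follows cleanly from the separation $r_j - r_{j-1} > \theta$ guaranteed by the algorithm.
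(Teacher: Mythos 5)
Your proposal is correct and follows essentially the same strategy as the paper's proof: bound the algorithm's output by a constant times the number $t$ of Phase-1 bisectors, and lower-bound $\opt(\I)$ by roughly $t/2$ using the fact that consecutive reference candidates are more than $\theta$ apart. The paper phrases the lower bound via disjointness of \emph{alternate segments} (unions of Voronoi cells of candidates $\theta$-close to each reference), whereas you argue directly that a single proxy's top choice can be $\theta$-close to at most two consecutive references; these are equivalent observations, and your version even yields the marginally stronger bound $\opt(\I)\ge\lceil (t+1)/2\rceil$.
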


For any fixed $\alpha \in [0,1]$, we say that an algorithm is an \emph{$\alpha$-approximation algorithm} for \ProxyVoting{} if, given any instance $\I$ as input, it terminates in polynomial time and returns a $\theta$-representative arrangement of at most $\alpha \cdot \opt(\I)$ proxies.

\begin{proof} (of \Cref{cor:Approx_Algo_Unrestricted})
Recall 
that the Voronoi cell $V_i$ of candidate $c_i \in \C$ is the set of all voter locations $v \in [0,1]$ for which $\top(v) = c_i$. Let us define the \emph{segment} $S_i$ associated with the candidate $c_i$ as the union of Voronoi cells of all candidates that are $\theta$-close to $c_i$, i.e., $S_i \coloneqq \cup_{\ell \in [m]:|c_\ell-c_i| \leq \theta} V_\ell$. It is easy to see that if a voter is located at $c_i$ (i.e., $v = c_i$), then under a $\theta$-representative proxy arrangement, its closest proxy must be located in $S_i$ (i.e., $p^v \in S_i$).

Let $\I = \langle \C, \theta \rangle$ denote a given instance of \ProxyVoting{}, and let $\opt(\I)$ denote the optimal number of proxies in any $\theta$-representative proxy arrangement in $\I$. Let $c_{i_1},c_{i_2},\dots,c_{i_t}$ denote the set of candidates from left to right that are chosen as reference candidates during Phase 1 of Algorithm~\ref{alg:UpperBound_Unrestricted}. Additionally, let $S_{i_1},\dots,S_{i_t}$ denote the corresponding segments. 

We will argue that the alternate segments, namely $S_{i_1},S_{i_3},\dots,S_{i_t}$ (if $t$ is odd) or $S_{i_1},S_{i_3},\dots,S_{i_{t-1}}$ (if $t$ is even), are disjoint. (Stated differently, the set of alternate segments constitutes an independent set of the underlying interval graph.) Note that this would imply that $\opt(\I) \geq \lceil\frac{t}{2}\rceil$ since, by the above observation, any $\theta$-representative proxy arrangement must have a proxy in each segment. We know from the proof of \Cref{thm:UpperBound_Unrestricted} that the total number of proxies under Algorithm~\ref{alg:UpperBound_Unrestricted} is at most $\frac{3}{2}$ times the number of proxy bisectors computed by the algorithm in Phase 1. Thus, with $t$ bisectors, Algorithm~\ref{alg:UpperBound_Unrestricted} uses at most $\frac{3}{2}t$ proxies, which gives the stated approximation guarantee.

To see why the alternate segments are disjoint, fix any $\ell \in [t]$ and consider the segments $S_{i_\ell}$, $S_{i_{\ell+1}}$, and $S_{i_{\ell+2}}$. Since the candidates $c_{i_\ell}$ and $c_{i_{\ell+1}}$ are chosen as reference candidates in consecutive iterations, we have that $c_{i_{\ell+1}} > c_{i_\ell} + \theta$, and therefore $c_{i_{\ell+1}}$ (and any candidate to its right) is not included in the segment $S_{i_\ell}$. For a similar reason, the candidate $c_{i_{\ell+1}}$ (and any candidate to its left) is excluded from the segment $S_{i_{\ell+2}}$, implying that $S_{i_\ell}$ and $S_{i_{\ell+2}}$ are disjoint, as desired.
\end{proof}

\section{Proof of Proposition~\ref{prop:rep-voting}}
\label{sec:Proof_rep-voting}

\Condorcet*
\begin{proof}
We first prove the proposition when $r$ is a strict-Condorcet rule, where the tie-breaking mechanism favors the leftmost alternative. The proof for the other tie-breaking mechanism is similar. 

For any preference profile $P$ with odd number of voters, let $\bar v$ denote the location of $P$'s median voter(s). Let $p^P = \{p^v:v\in P \}$ denote the proxy profile. It is well-known that the top choice of $\bar v$ is the Condorcet winner (see, e.g.,~\citealp{Congleton2002:The-Median}), which means that $r(P)$ is the candidate that is closest to $\bar v$, i.e.~$\top(\bar v)$. Therefore, it suffices to prove that $r(p^P) = \top(p^{\overline v})$, or in other words, $p^{\bar v}$ is the median of $p^P$. This follows after the fact that for any voter $v'\in P$ such that $v'\le \bar v$ (respectively, $v'\ge \bar v$), we must have $p^{v'}\le p^{\bar v}$ (respectively, $p^{v'}\ge p^{\bar v}$).

For any preference profile $P$ with even number of voters, it is possible to have two median voters, whose locations are denoted by $\bar v_1<\bar v_2$, respectively. It follows that the weak Condorcet winners are $\{\top(\bar v_1), \top(\bar v_2)\}$, and in case the set contains more than one candidate, according to the tie-breaking mechanism, the leftmost candidate $\top(\bar v_1)$ wins. Therefore, it suffices to prove that $r(p^P) = \top(p^{\bar v_1})$. Similar to the case with odd number of voters, it is not hard to verify that the median locations of $p^{P}$ are $\{p^{\bar v_1},p^{\bar v_2}\}$. In case the set contains two locations, according to the tie-breaking mechanism, $\top(p^{\bar v_1})$ will be chosen as the winner. This completes the proof.
\end{proof}

\end{document}